\theoremstyle{plain}
\newtheorem{theorem}{Theorem}[section]
\newtheorem{lemma}[theorem]{Lemma}
\theoremstyle{definition}
\newtheorem{assumption}[theorem]{Assumption}
\theoremstyle{remark}
\newcommand{\ours}{POQD}
\newcommand{\oursfull}{Performance-Oriented Query Decomposer}
\icmltitlerunning{\ours: \oursfull\ for Multi-vector retrieval}
\begin{document}

\twocolumn[
% \icmltitle{Submission and Formatting Instructions for \\
%            International Conference on Machine Learning (ICML 2025)}

\icmltitle{\ours: \oursfull\ for Multi-vector retrieval}

% It is OKAY to include author information, even for blind
% submissions: the style file will automatically remove it for you
% unless you've provided the [accepted] option to the icml2025
% package.

% List of affiliations: The first argument should be a (short)
% identifier you will use later to specify author affiliations
% Academic affiliations should list Department, University, City, Region, Country
% Industry affiliations should list Company, City, Region, Country

% You can specify symbols, otherwise they are numbered in order.
% Ideally, you should not use this facility. Affiliations will be numbered
% in order of appearance and this is the preferred way.
% \icmlsetsymbol{equal}{*}

\begin{icmlauthorlist}
% \icmlauthor{Firstname1 Lastname1}{equal,yyy}
% \icmlauthor{Firstname2 Lastname2}{equal,yyy,comp}
% \icmlauthor{Anonymous}{comp}
\icmlauthor{Yaoyang Liu}{xx}
\icmlauthor{Junlin Li}{yy}
\icmlauthor{Yinjun Wu}{yy}
\icmlauthor{Zhen Chen}{zz}
% \icmlauthor{Firstname5 Lastname5}{yyy}
% \icmlauthor{Firstname6 Lastname6}{sch,yyy,comp}
% \icmlauthor{Firstname7 Lastname7}{comp}
% %\icmlauthor{}{sch}
% \icmlauthor{Firstname8 Lastname8}{sch}
% \icmlauthor{Firstname8 Lastname8}{yyy,comp}
%\icmlauthor{}{sch}
%\icmlauthor{}{sch}
\end{icmlauthorlist}

% \icmlaffiliation{yyy}{Department of XXX, University of YYY, Location, Country}
\icmlaffiliation{xx}{School of Infomation, Renmin University}
\icmlaffiliation{yy}{School of Computer Science, Peking University}
\icmlaffiliation{zz}{Fundamental Industry Training Center, Tsinghua University}
% \icmlaffiliation{sch}{School of ZZZ, Institute of WWW, Location, Country}

\icmlcorrespondingauthor{Yinjun Wu}{wuyinjun@pku.edu.cn}
\icmlcorrespondingauthor{Zhen Chen}{zhenchen@tsinghua.edu.cn}
% \icmlcorrespondingauthor{Firstname2 Lastname2}{first2.last2@www.uk}

% You may provide any keywords that you
% find helpful for describing your paper; these are used to populate
% the "keywords" metadata in the PDF but will not be shown in the document
\icmlkeywords{Machine Learning, ICML}

\vskip 0.3in
]

% this must go after the closing bracket ] following \twocolumn[ ...

% This command actually creates the footnote in the first column
% listing the affiliations and the copyright notice.
% The command takes one argument, which is text to display at the start of the footnote.
% The \icmlEqualContribution command is standard text for equal contribution.
% Remove it (just {}) if you do not need this facility.

\printAffiliationsAndNotice{}  % leave blank if no need to mention equal contribution
% \printAffiliationsAndNotice{\icmlEqualContribution} % otherwise use the standard text.

\begin{abstract}
% Dense retrieval plays a critical role in various systems, such as retrieval-augmented generation (RAG) systems. 
% Multi-vector retrieval (MVR) has shown its potential to enhance dense retrieval performance. 
% MVR starts by representing both queries and documents into multiple vectors, then calculates the similarity scores between each query vector and document vector, and then aggregates these scores as the overall query-document similarity. 
Although Multi-Vector Retrieval (MVR) has achieved the state of the art on many information retrieval (IR) tasks, its performance highly depends on how to decompose queries into smaller pieces, say phrases or tokens. However, optimizing query decomposition for MVR performance is not end-to-end differentiable. Even worse, jointly solving this problem and training the downstream retrieval-based systems, say RAG systems could be highly inefficient. To overcome these challenges, we propose \oursfull\ (\ours), a novel query decomposition framework for MVR. \ours\ leverages one LLM for query decomposition and searches the optimal prompt with an LLM-based optimizer. 
% However, we notice that solving the problem of searching the decomposed sub-queries for optimizing MVR performance cannot be accomplished via an end-to-end differentiable training algorithm and could be highly inefficient. 
We further propose an end-to-end training algorithm to alternatively optimize the prompt for query decomposition and the downstream models. This algorithm can achieve superior MVR performance at a reasonable training cost as our theoretical analysis suggests. \ours\ can be integrated seamlessly into arbitrary retrieval-based systems such as Retrieval-Augmented Generation (RAG) systems. Extensive empirical studies on representative RAG-based QA tasks show that \ours\ outperforms existing query decomposition strategies in both retrieval performance and end-to-end QA accuracy. \ours\ is available at \url{https://github.com/PKU-SDS-lab/POQD-ICML25}. 
\end{abstract}
\section{Introduction}\label{sec: intro}
%% brief intro to dense retrieval and multi-vector retrieval
%Dense retrieval is critical, especially for many downstream tasks. There are many ways of enhancing retrieval performance beyond fine-tuning pre-trained embedding models. One of them is multi-vector retrieval, which aims at decomposing queries and documents into smaller units, computing the similarity scores at fine-grained level and then aggregating these scores. One typical multi-vector method is colbert, which performs decompositions over queries at token level. 
Dense retrieval retrieves documents by evaluating their similarity scores with user queries \cite{mitra2018introduction, gao2021condenser, zhao2024dense}. It underpins many systems, in particular, retrieval-augmented generation (RAG) frameworks \cite{karpukhin2020dense}, where retrieval accuracy is paramount. 
% A promising method to enhance retrieval accuracy is 
Multi-Vector Retrieval (MVR) enhances retrieval accuracy by leveraging multiple representations for finer-grained matching \cite{khattab2020colbert}.
% Instead of computing the similarity scores between the embeddings of queries and documents, 
MVR methods, e.g., ColBERT \cite{khattab2020colbert}, decompose queries and documents into smaller units, say tokens. For each query token, we identify the most similar document piece to it and calculate their similarity, which is referred to as the {\em MaxSim operation} in \cite{khattab2020colbert}. Such scores are then aggregated across all query tokens as the overall query-document similarity.
% to compute the overall query-document similarity score, 
% to compute the similarity between queries and documents for retrieval. 
% quantify the similarity between one query and one document by computing their fine-grained token-level similarity scores first and then aggregating (say computing MAX) these scores. 
Compared to standard dense retrieval solutions, this strategy more effectively captures the fine-grained similarities between queries and documents, enhancing performance in information retrieval (IR) tasks \cite{khattab2020colbert, santhanam2022colbertv2} and retrieval-based systems like RAG \cite{xu2024search}. 

%% Decomposing queries and documents at token level is sub-optimal. A more reasonable way is to decompose queries at sub-query level. refer to the performance numbers in experiment section. However, it is still challenging to determine how to decompose queries into sub-queries. 
% Despite the success of the existing MVR solutions, we observe that how to decompose queries into fine-grained pieces matters. Although mainstream MVR solutions like ColBERT \cite{khattab2020colbert} decompose queries by tokens, this strategy underperforms the way of decomposing queries into slightly more coarse-grained pieces, say phrases, in tasks such as RAG, as revealed by the experimental results in Section \ref{sec: exp}. However, considering exponentially large numbers of ways of decomposing queries, one critical challenge thus arises, i.e., {\em how to decompose a query into smaller pieces of arbitrary granularities to optimize the multi-vector retrieval performance.}

% While existing multi-vector retrieval (MVR) solutions have demonstrated significant success, the method of query decomposition plays a crucial role in performance. 

In this paper, we aim to develop a new MVR strategy to enhance the performance of arbitrary retrieval-based systems, with a particular focus on RAG systems. Note that traditional MVR approaches, in particular, ColBERT \cite{khattab2020colbert}, decompose queries at the token level. However, as revealed in Section \ref{sec: motivation},
% \yinjun{pointed to specific places to talk about this issue}, 
decomposing queries into slightly more coarse-grained units, such as phrases, can yield better results for tasks like retrieval-augmented generation (RAG). Furthermore, we observe that the performance of these tasks highly depends on how we decompose queries. 
% This observation highlights a fundamental challenge: 
Considering that the space of all possible decomposed sub-queries is exponentially large, this thus raises one critical question, i.e., 
% exponentially large number of ways of decomposing queries, 
{\it how can we effectively generate sub-queries of arbitrary granularity to optimize the performance of downstream retrieval-based systems?}

\begin{figure*}[!t]
    \centering
    \includegraphics[width=\linewidth]{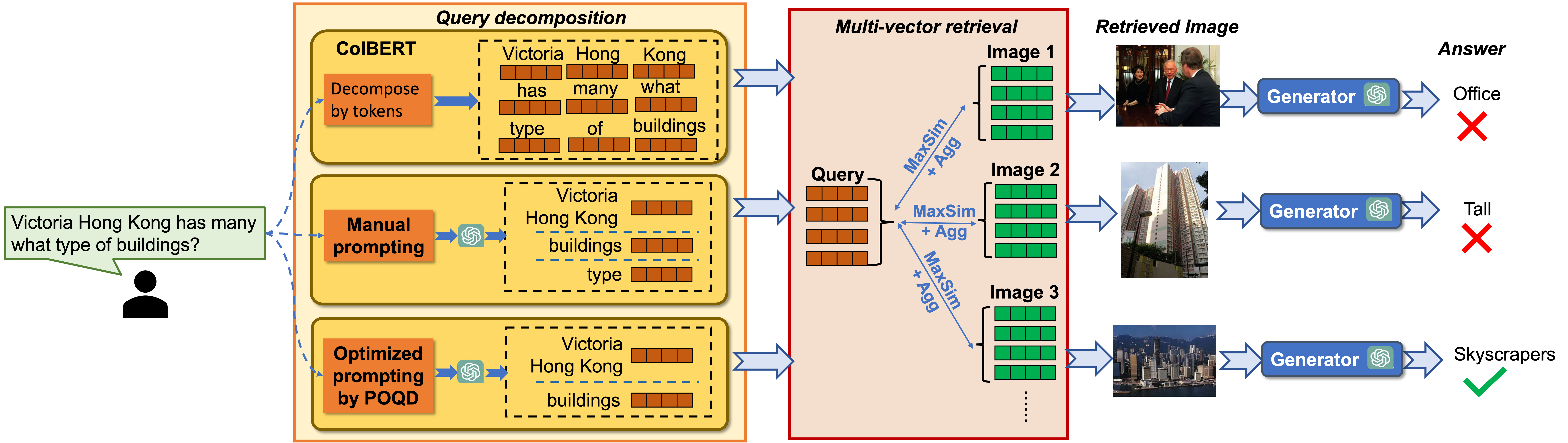}
    \caption{Motivating example from ManyModalQA dataset \cite{hannan2020manymodalqa}. We aim to answer the question ``Victoria Hong Kong has many what type of buildings?'' using retrieval-augmented generation (RAG). 
    % Specifically, relevant images need to be retrieved from the database to provide essential information for answering this question. 
    To enhance the retrieval accuracy and thus ensure the answer correctness, we employ Multi-Vector Retrieval (MVR), which decomposes the query into sub-queries and embeds them. MaxSim operations (as defined in \cite{khattab2020colbert}) are then applied to compute similarity scores for retrieval. 
    % starts by decomposing the input query into multiple sub-queries and then embedding each of them. The similarity between these embeddings and the embeddings of each image is computed by performing MaxSim operations  for retrieval. 
    Traditional query decomposition strategies, which are primarily based on heuristics, such as decomposing by tokens \cite{khattab2020colbert} or by prompting LLMs with manually crafted prompts \cite{li2024framework}, often retrieve irrelevant images, thus resulting in incorrect answers. In contrast, we optimize LLM prompts to generate more effective sub-queries, improving QA accuracy.
    % optimize the prompt of the LLMs used for decomposing queries into sub-queries such that the resulting sub-queries can enhance the QA accuracy. 
    This approach enables MVR to retrieve images of Victoria Harbour with skyscrapers, leading to the correct answer: ``Skyscrapers''.
    % Performing MVR with these sub-queries can retrieve a picture depicting a harbor with many Skyscrapers,  which is highly relevant to ``Victoria Hong Kong'', thus generating the correct answer, ``Skyscrapers''.
    }
    \label{fig:motivating_exp}
\end{figure*}

%% How to decompose a text question into multi-subquestions has been extensively studied in the Question-Answering domain, which aims at decomposing questions into finer-grained components and generating answers step by step. For example, .....However, to our knowledge, none of them determines the decomposition of sub-questions for optimizing the task performance, say QA performance or retrieval performance. Refer to the figure in the intro.

 % It aims to break complex questions into simpler parts, allowing Large Language Models (LLMs) to reason step by step and improve accuracy. Various strategies exist, such as Li et al. (2024), which uses manually crafted prompts for decomposition. However, as shown in Figure 1, applying MVR to the resulting sub-queries may retrieve incorrect images, leading to wrong answers in RAG-based QA.

Query decomposition has been widely studied in question answering (QA), especially in multi-hop QA. It aims to break down complicated questions into simpler components, allowing Large Language Models (LLMs) to reason step by step, thereby enhancing QA accuracy. Various question decomposition strategies exist, such as \cite{li2024framework}, which prompts LLMs with manually crafted prompts for query decomposition.
% \cite{yang2021exploring} trains a sequence-to-sequence model over a dataset with human-annotated sub-questions such that it can generate decomposed sub-questions directly. 
However, as shown in Figure \ref{fig:motivating_exp}, applying the resulting sub-queries to MVR could retrieve an incorrect image, ultimately generating a wrong answer in the RAG-based QA tasks. 

% To our knowledge, {\em no state-of-the-art solutions exist for searching the decomposed sub-queries that can optimize the retrieval performance or even the downstream RAG performance}. 

%%To address this issue, we proposed a framework that can automatically determine decomposed sub-questions and optimize the downstream performance simultaneously. To achieve this goal, we leverage the idea of performing optimizations with LLMs for .... Note that RL is a potential solution but it is not quite stable. 

To address this issue, 
it would be ideal to train a model for searching the decomposed sub-queries that can optimize the downstream performance. % of the downstream retrieval-based system. 
However, two critical technical challenges arise. 
First, the search process is non-differentiable, as sub-queries cannot propagate gradients from the downstream performance score. Second, evaluating candidate sub-queries requires training downstream RAG models, which is computationally expensive.

% The first one is the {\em non-differentiability issue} 
% % that the pipeline of decomposing a query into sub-queries and then performing multi-vector retrieval with those sub-queries is {\em not end-to-end differentiable}. This is because 
% since during the process of searching candidate sub-queries, these sub-queries cannot carry the gradients propagated from the MVR performance score. In addition, to evaluate each candidate set of sub-queries during the search process, one has to train the downstream retrieval-based system by taking these sub-queries as input, which can be {\em computationally intensive}. 

To tackle these two challenges, we proposed \oursfull\ (\ours\ for abbreviation), a novel performance-driven query decomposition framework. 
% that can be incorporated into arbitrary retrieval-based systems, such as the Retrieval-Augmented Generation (RAG) system. 
To address the non-differentiability issue, we first prompt one LLM to generate decomposed sub-queries for one input query, which can be iteratively refined 
% for   and we reveal that the decomposed sub-queries can be optimized for  the prompt is iteratively refined 
by an LLM-based optimizer \cite{yang2024large} to enhance downstream performance. 
% Such an LLM-based optimizer iteratively searches the prompt for query decomposition such that the MVR performance is optimized. 
But evaluating a candidate prompt $p$ requires training the downstream model, $\Theta$, with its induced sub-queries.
% But during this search process, obtaining the performance score for a candidate prompt $p$ requires training the MVR model, $\Theta$, using the sub-queries induced by $p$. 
Hence, we propose a training algorithm to alternatively refine the prompt $p$ while only training the model $\Theta$ for a few epochs.
% We therefore propose to jointly train  is evaluated on  which is trained by taking as input the decomposed sub-queries produced by using a searched prompt. 
% To mitigate the potential {\em training inefficiency}, we propose to limit the training of $\Theta$ to a few iterations per candidate prompt $p$. 
Our theoretical analysis confirms the effectiveness of this approach with appropriate hyper-parameter configurations.
Note that such performance optimization process is conducted in a {\em weakly-supervised} manner since the downstream RAG performance rather than the intermediate retrieval performance is optimized. 
% during the training process while  for the retrieval process are usually absent. 
This strategy is even effective in applications such as Multi-hop QA \cite{yang2018hotpotqa}, in which the queries are {\em dynamically generated} during the reasoning process. 
% However, unlike IR tasks, only the correct generative results serve as the training data annotations for RAG tasks while the  are missing. We 

We further perform extensive empirical studies to evaluate the effectiveness of \ours\ on a variety of RAG-based QA tasks,
% from \yinjun{X} Information Retrieval (IR) benchmark datasets and \yinjun{Y} Retrieval-Augmented generation (RAG) benchmark datasets. 
% The RAG experiments are primarily conducted on the question-answering (QA) 
covering both image and text QA tasks. The empirical studies suggest that \ours\ can outperform the state-of-the-art in both retrieval and QA accuracy by a large margin.

Our contributions can be summarized as follows:
\begin{enumerate}[leftmargin=*,nolistsep, nosep]
    \item We introduce \ours, a novel query decomposition framework that can perform query decomposition for optimizing multi-vector retrieval performance. 
    \item We design a training algorithm, which alternates between training the downstream RAG models and refining the prompt used for query decomposition. 
    % connecting the LLM-based optimization framework for automatic query decomposition and the downstream retrieval-based system, which are jointly trained for discovering an optimal decomposed sub-queries. 
    Theoretical analysis demonstrates the effectiveness of this training algorithm with appropriate hyper-parameter configurations.
    % , this training algorithm can identify the prompt for optimizing the multi-retrieval performance at a reasonable training cost. 
    % Their optimality can be theoretically proven through our theoretical analysis. 
    % \item \ours\ can be employed to arbitrary retrieval-based systems, such as RAG systems, in a weakly-supervised manner. This strategy is even effective when queries to be decomposed are dynamically generated in applications such as Multi-hop QA. 
    \item We perform extensive experiments on RAG-based QA tasks, covering both image QA and text QA, which suggests that \ours\ can outperform the state-of-the-art in both retrieval and QA accuracy by a large margin.
\end{enumerate}
\section{Motivation}\label{sec: motivation}

% \begin{table}[ht]
% \centering
% \begin{tabular}{c|c|c|c}
% \hline
%  & Hong & Kong & Hong Kong \\
% \hline
% \includegraphics[width=0.5in]{Figures/fig_exp1.png} & 0.2368 \vfill & 0.2101 & 0.2019 \\ \hline
% \includegraphics[width=0.5in]{Figures/fig_exp2.png} &  \\
% \hline
% \end{tabular}
% \caption{Table with image in a cell}
% \end{table}

We conduct an in-depth analysis of the motivating example shown in Figure \ref{fig:motivating_exp} to further motivate our method. 
\subsection{Why does ColBERT fail?}
To understand why ColBERT fails in the example shown in Figure \ref{fig:motivating_exp}, we perform MVR with a mini-query ``Hong Kong''. For this query, ColBERT tokenizes it into two individual tokens, ``Hong'' and ``Kong''.  For each image, we then perform the MaxSim operation between these two tokens and the fine-grained image patches, i.e., determine the similarity score between one token and its most similar image patch. Such similarity scores are subsequently aggregated across all tokens to obtain the overall query-document similarity score. 
% The similarity between a sub-query (i.e., a token here) and an image patch is determined by using their CLIP \cite{radford2021learning} embeddings in this example. 
As depicted in Figure \ref{fig: motivating_analysis}, a surprising result emerges: despite its visual irrelevance to Hong Kong, the photo of Lee Kuan Yew, the 1st Prime Minister of Singapore exhibits higher similarity to the mini-query ``Hong Kong'' than the ground-truth image. 
% Notably, the similarity score between the token ``Kong'' and the photo of Lee Kuan Yew is exceptionally high. 
A deep investigation suggests that the token ``kong'' could refer to a black gorilla-like monster, it thus yields unrealistically high similarity to the image patch highlighted with a green bounding box since {\em both this patch and the figure of ``Kong'' are mostly black}.
This coincidence thus leads to a higher ranking of the Lee Kuan Yew’s image than the ground truth.
% while the ground-truth image’s Patch B buildings is less similar. Since  We will clarify this in the revision.
% incorrectly retrieved image in Figure \ref{fig:motivating_exp} by ColBERT, i.e., , the resulting similarity scores are 0.2368 and 0.2101 respectively. We then aggregate these scores to . Surprisingly, despite the irrelevance of this image to ``Hong Kong'',   the resulting aggregated score exceeds that of the ground-truth relevant image (0.4469 VS 0.4283), . Consequently, the ground-truth image is ranked lower than the photo of Lee Kuan Yew. 
In contrast, by treating ``Hong Kong'' as a unified phrase and evaluating its similarity to each image, the ground-truth image achieves a higher similarity score than other images. % However, ColBERT incorrectly ranks this photo first among all other images. 
% One possible reason for this is the close connection between Lee Kuan Yew and one place named ``Hong Lim'' according to WikiPedia\footnote{\url{https://en.wikipedia.org/wiki/Lee_Kuan_Yew}}.
% On the other hand, the similarity score between ``Hong'' or ``Kong'' and the ground-truth image (see the last row of Figure \ref{fig:motivating_exp}) is much lower \yinjun{similarity number} than that between the phrase ``Hong Kong'' and this image, 
% thus failing to retrieve this image for the downstream generation process. 
% To understand why ColBERT does not work well, we continue the discussion on the example shown in Figure \ref{fig:motivating_exp}. 
% Since we aim to decompose queries at the level of sub-queries, rather than individual tokens, it thus becomes essential to explain why performing query decomposition at the token level does not work, We thus explain this with some concrete examples from one QA-oriented IR dataset \yinjun{what dataset} in this section.  
% \yinjun{talk about examples here}
This example thus underscores the necessity of decomposing queries at a slightly coarser-grained level, rather than at the token level for MVR.
\begin{figure}[!t]
    \centering
    \includegraphics[width=0.9\linewidth]{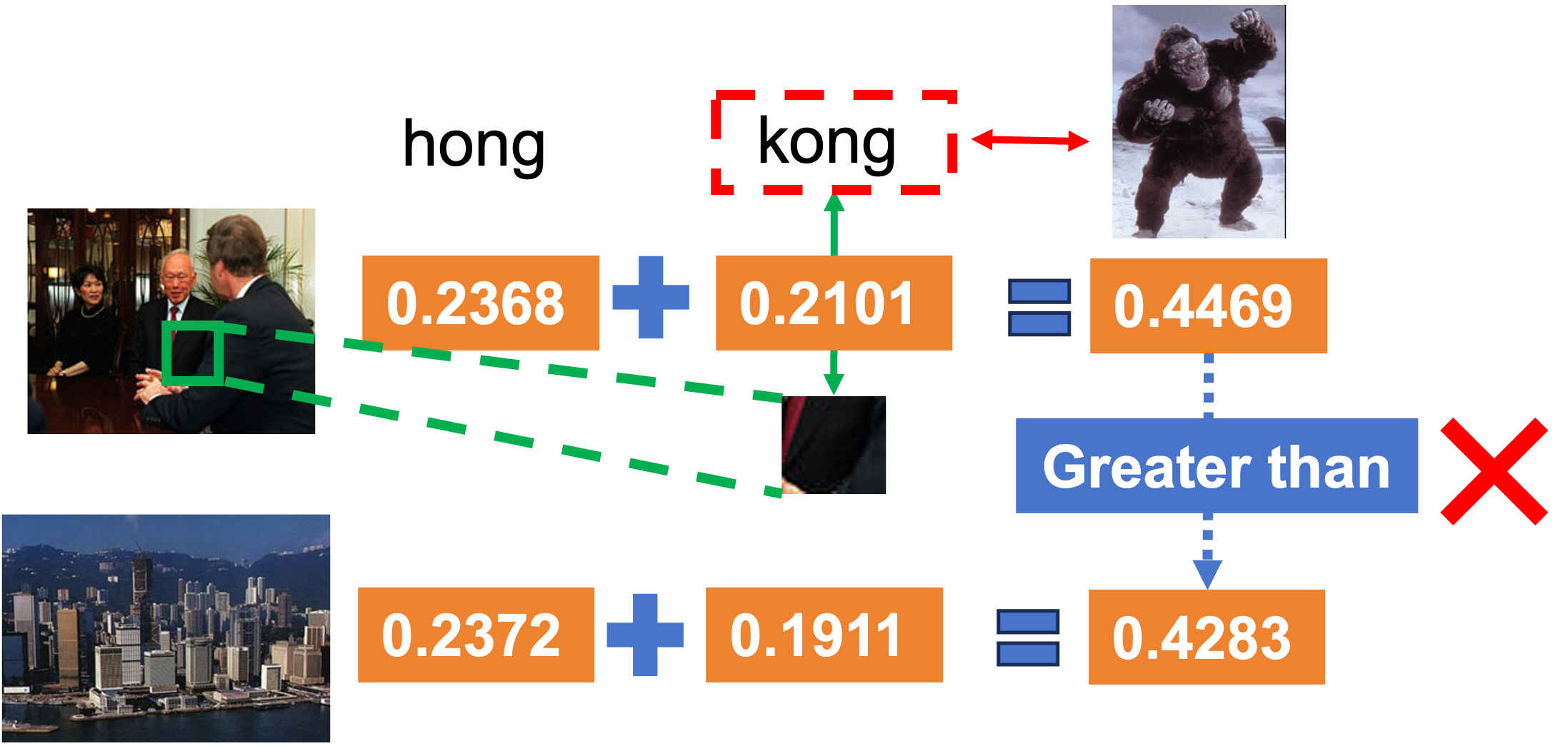}
    \caption{Further analysis on the motivating example: the token ``kong'' is relevant to the photo of a black gorilla-like monster, which is mostly black. Coincidentally, in the photo of Lee Kuan Yew, the patch identified as the most relevant to the token 'kong' is also mostly black. }
    \vspace{-5mm}
    \label{fig: motivating_analysis}
\end{figure}

\subsection{Why is it essential to optimize query decomposition?}
As mentioned in Figure \ref{fig:motivating_exp}, we can manually craft prompts for LLMs so that they can generate decomposed sub-queries \cite{li2024framework}. These sub-queries include critical phrases such as ``Victoria Hong Kong'' and ``building''. However, performing MVR with these sub-queries still incorrectly retrieves a less relevant image (see the second row of Figure \ref{fig:motivating_exp}). In comparison to the optimal sub-queries discovered by our solutions, this method generates one extra sub-query ``type''. This extra sub-query is less informative than the other two sub-queries. Thus incorporating this sub-query into MVR may lead to inaccurate similarity scoring. Therefore, optimizing the query decomposition process by eliminating non-essential sub-queries, such as ``type,'' is crucial for improving retrieval accuracy.
% and only retain the other two, i.e., `Victoria Hong Kong'' and ``building''.
Hence,
% Given the above observations, 
the problem to be addressed is formally defined as follows.

\paragraph{Problem definition}
Given one query $Q=\{c_1,c_2,\dots,c_m\}$ composed of $m$ tokens, we aim to decompose it into $n$ sub-queries $\{q_i\}_{i=1}^K$ in which each $q_i$ is composed of tokens from $\{c_1,c_2,\dots,c_m\}$. The goal of performing this query decomposition is to maximize the performance of downstream retrieval-based systems. 
\section{Preliminary}
\subsection{Multi-vector retrieval}
% To retrieve a data item $D$ say an image or a text that is relevant to a user query $Q$, Multi-vector retrieval first 
To evaluate the similarity score between a query $Q$, and a document or image $D$, 
% the conventional method is to embed $Q$ and $D$ with an encoding model parameterized by $\theta$, say CLIP model, producing embedded query and document, $E_{\theta}(Q)$ and $E_{\theta}(D)$, followed by selecting $D$ whose embedding has the highest similarity score with $E(Q)$. 
% To enhance the retrieval accuracy, in particular, in the context of the shift of query distributions from short and simple ones to long and complicated ones,
% On the other hand, 
Multi-vector retrieval first decomposes $Q$ and $D$ into fine-grained pieces, denoted by $\{q_i\}_{i=1}^K$ and $\{d_j\}_{j=1}^m$, applies MaxSim operation to identify the most similar $d_j$ to each $q_i$ and then aggregates these similarity scores across all $q_i$ with the following formula:
% their embedding similarity to one data item $D$, and then aggregate their similarity scores across all sub-queries in  $\overset{\rightarrow}{Q}$. Formally speaking, this process could be formulated with the following formula:
\begin{small}
\begin{equation}\label{eq: mvr}
    % \text{SIM}_{\theta}(Q, D) = \frac{1}{K}\sum_{q \in \overset{\rightarrow}{Q}}\max_{d \in \overset{\rightarrow}{D}}\text{SIM}(E_{\theta}(q), E_{\theta}(d)).
    \text{SIM}_{\theta}(Q, D) = \frac{1}{K}\sum\nolimits_{i=1}^{K}\max_{1 \leq j \leq m}E_{\theta}(q_i)^{\top} E_{\theta}(d_j).
\end{equation}    
\end{small}
% Intuitively speaking, the above formula identifies the most similar document piece to each $q_i$ first and then averages their similar scores. 
As mentioned in Section \ref{sec: motivation}, we primarily study how to optimize the query decomposition process. But how to decompose documents or images may also matter. Therefore, 
% Indeed, how to decompose documents can also influence the retrieval performance, as revealed in Appendix \ref{}\yinjun{refer to experiments}. But 
to ensure a fair comparison between different query decomposition strategies, we decompose documents or images in the same way across all baseline methods and \ours. One exception is ColBERT for text retrieval, which is configured to decompose documents into tokens. 

\subsection{Retrieval-Augmented Generation (RAG)}\label{sec: IR_RAG}
As introduced in Section \ref{sec: intro}, we primarily study the effectiveness of \ours\ on 
% the IR task and 
RAG tasks. 
% For the former one, the goal is to optimize the following contrastive loss over a mini-batch of documents $D_{1:B}=[D_1,D_2,\dots,D_B]$ as suggested by \cite{khattab2020colbert,lee2024rethinking}:
% \begin{small}
% \begin{align}\label{eq: IR_obj}
%     \mathcal{L}(\theta) = -\sum_{Q}P_{\theta}(D_{+}|Q) = -\sum_{Q}\log\frac{\text{SIM}_{\theta}(Q, D_{+})}{\sum\limits_{b=1}^D \text{SIM}_{\theta}(Q, D_{b})},
% \end{align}    
% \end{small}
% in which $\text{SIM}_{\theta}(\cdot)$ is Equation \eqref{eq: mvr} and the ground-truth relevant document $D_{+}$ is assumed to be within the mini-batch $D_{1:B}$. For the RAG task, 
For this task, we aim to optimize the following objective function:
\begin{small}
\begin{align}\label{eq: RAG_obj}
    \mathcal{L}(\Theta) = -\log(\sum\nolimits_{D \in D_K} P_{\theta}(a|Q, D)P_{\beta}(D|Q)),
\end{align}    
\end{small}
in which $\Theta=(\theta, \beta)$ and $D_K$ represents the set of Top-$K$ most relevant documents to a query $Q$ according to the similarity score defined in Equation \eqref{eq: mvr}. Additionally, $P_{\theta}$ represents the likelihood of the ground-truth answer $a$, which is parameterized by the generator model parameter $\theta$. Note that this objective function relies on the similarity function defined in Equation \eqref{eq: mvr} to determine the Top-K most relevant documents, thus implicitly dependent on how queries are decomposed. 
Also, we follow prior studies such as \cite{barnett2024seven}, to only train $\theta$ while maintaining the retrieval model, $\beta$, in RAG systems fixed to ensure training efficiency. This is because updating retrieval models usually requires rebuilding indexes and re-embedding corpus, which could be highly time-consuming. 

% Note that the decomposition of $Q$ into multiple sub-queries can be performed by prompting large language models eas \cite{}. 

% \subsection{LLM-based optimization framework}
% As introduced in Section \ref{sec: intro}, we adapt the LLM-based optimization framework proposed by \cite{yang2024large} in \ours.  
% % optimize the prompt for generating decomposed sub-queries. 
% Intuitively speaking, this framework aims to employ one LLM as a generic optimizer and leverage it to iteratively update the prompt for another LLM. To accomplish this, this framework constructs a prompt template that is comprised of some meta-prompts, a list of solution-score pairs, and a description of the problem to be solved (see the example in Figure \ref{}). Within each solution-score pair, the solution represents the prompt produced by the LLM-based optimizer while the score is evaluated in the downstream optimization task. This list of solution-score pairs is initially empty but whenever the LLM-based optimizer produces a solution and collects its score, it can then be appended to the solution-score pair list, which is then used to update the prompt for the LLM-based optimizer to further generate a new solution. This proceeds iteratively and can gradually solve the defined optimization problem. 

\section{Methodology}\label{sec: method}
This section starts with the framework overview in Section \ref{sec: overview}, which is followed by illustrating how to generate optimal sub-queries with \ours\ in Section \ref{sec: sub-query-generation} and describing our end-to-end training algorithm in Section \ref{sec: sub-query-training}. We conclude this section with a theoretical analysis of the training algorithm in Section \ref{sec: theory}. 
% We proposed a novel framework that can jointly train the query decomposition module and the retrieval or RAG module, which is illustrated in Algorithm \ref{alg:method}. This framework is also visualized in Figure \ref{fig:alg_overview}.  

\subsection{Framework overview}\label{sec: overview}
% Given a set of queries or questions $\mathcal{Q}$, 
Given a query $Q$, we aim to perform query decomposition by prompting one LLM (referred to as the {\em Query Decomposer}) with a prompt $p$. Those sub-queries are then employed to perform multi-vector retrieval. Therefore, the quality of the sub-queries produced by the {\em Query Decomposer} highly depends on the prompt $p$. In light of this, we propose to adopt an LLM-based optimizer (referred to as the {\em Prompt Optimizer}) to generate $p$ and iteratively refine it for optimizing the downstream performance (see Section \ref{sec: sub-query-generation}). 
% As introduced in Section \ref{sec: IR_RAG}, 
% the parameter $\Theta=\theta$ for IR tasks while 
% $\Theta=(\theta, \beta)$ for the RAG tasks. . 
The pipeline of generating prompt $p$ and producing decomposed sub-queries with this prompt is visualized in Figure \ref{fig:forward}. 
% for the downstream retrieval-based system $S$ 

As mentioned in Section \ref{sec: intro}, the performance to be optimized in our setup would not only depend on $\Theta$, but also depend on the decomposed sub-queries, which is further dependent on the prompt $p$. 
Hence, the loss function defined in \eqref{eq: RAG_obj} is reformulated as $\mathcal{L}(\Theta;p)$.
% , in which $\Theta=\theta$. 
% To optimize this loss function, we first adopt an LLM-based optimizer (referred to as the {\em Prompt Optimizer}) to optimize $p$ with a fixed $\Theta$ . 
To optimize this loss function, we proposed an end-to-end training algorithm to jointly optimize $p$ and $\Theta$ (see Section \ref{sec: sub-query-training}). 
% Intuitively speaking, this training algorithm adapts the LLM-based generic optimizers introduced by \cite{yang2024large} into the training process for $\Theta$. 

In Section \ref{sec: theory}, we further provide a theoretical analysis of this end-to-end training algorithm. It suggests that with appropriate hyper-parameter configurations, this algorithm can effectively optimize the prompt $p$ and minimize the loss $\mathcal{L}(\Theta;p)$ at a reasonable training cost. 

\subsection{Optimize query decomposition with a fixed $\Theta$}\label{sec: sub-query-generation}
\begin{figure*}[t]
    \centering
    \includegraphics[width=\linewidth]{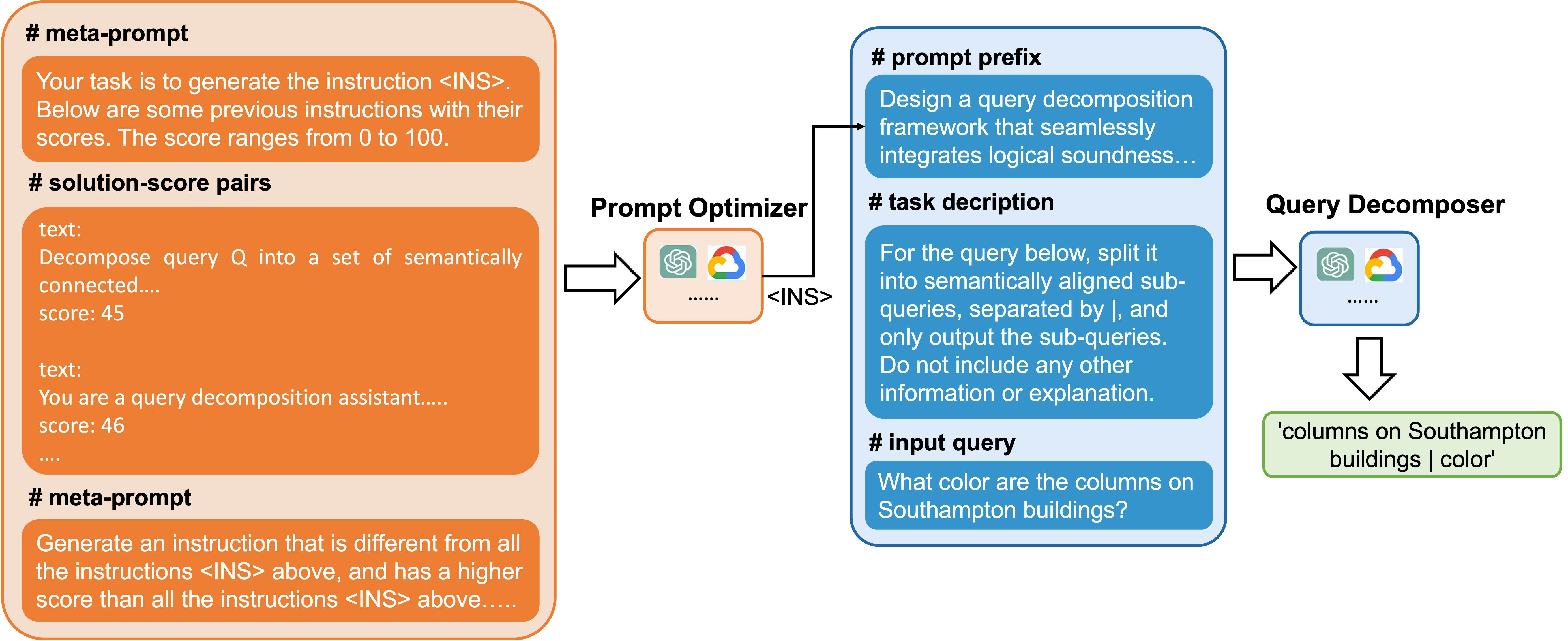}
    \caption{The pipeline of generating decomposed sub-queries. This pipeline primarily consists of two LLMs, one serving as the {\em Prompt Optimizer} while the other one serving as the {\em Query Decomposer}. The prompt optimizer first takes two pieces of meta-prompts as well as solution-score pairs collected during the execution of Algorithm \ref{alg:opt_prompt} as input and produces the optimized solution, which is an essential prompt prefix for the {\em Query Decomposer}. In addition, the query decomposer further incorporates the description of the query decomposition task and one input query in the prompts. The decomposed sub-queries (separated by vertical lines `\textbar{}') are then generated by the Query Decomposer.}
    \label{fig:forward}
\end{figure*}

\begin{small}
\begin{algorithm}[!h]
\small
   \caption{Optimize query decomposition}
   \label{alg:opt_prompt}
\begin{algorithmic}[1]
   \STATE {\bfseries Input:} A set of training queries: $\mathcal{Q}^{\text{train}}$, a retrieval-based system parameterized by $\Theta$, the old prompt $p^{\text{old}}$.
   % \STATE Initialize $best\_H=\infty$
    % \STATE Generate a prompt prefix $p_0$ with \textbf{Step 1} in Section \ref{sec: sub-query-generation}. \label{line: 1}
   % \STATE Initialize 
   % \STATE Randomly initialize a prompt prefix $p_0$ and $\Theta$
   % \WHILE{not converge}
   % \STATE $p^{\text{old}} \leftarrow p^{\text{new}}$
   % \STATE \texttt{\(\triangleright\) Update prompt $p$ with fixed $\Theta$}
   \STATE Initialize the solution-score pairs $LS=[(p^{\text{old}}, \mathcal{L}(\Theta;p^{\text{old}}))]$\label{line:inner_while_1}
   \WHILE{not converge}
      \STATE Prompt the {\em Prompt Optimizer} by leveraging $LS$ to generate a new prompt $p$ with \textbf{Step 1} of Section \ref{sec: sub-query-generation}.
      % \STATE Construct a complete prompt $p$ by leveraging $p_0$
      % \FOR{For each $Q$ in $\mathcal{Q}^{\text{train}}$}
   \STATE Execute \textbf{Step 2} to decompose each query in $\mathcal{Q}^{\text{train}}$ by prompting the {\em Query Decomposer} with $p$.
   % with \textbf{Step 2} of Section \ref{sec: sub-query-generation}. 
   % \STATE Decompose each $Q$ into a set of sub-queries  with the prompt $p$ for the 
   % \ENDFOR \label{line: 2}
      % \ENDIF
    \STATE Evaluate the training loss, $\mathcal{L}(\Theta;p)$, over $\mathcal{Q}^{\text{train}}$
    % and their decomposed sub-queries.
      % \IF{$best\_H - \mathcal{L}(\Theta;p) > \alpha$}\label{line:update_prompt}
      % \STATE $best\_H = \mathcal{L}(\Theta;p)$ and break
      % \ENDIF
    and add $(p,\mathcal{L}(\Theta;p))$ to $LS$ \label{line: add solution-score pair}
    \IF{$\mathcal{L}(\Theta;p) - \mathcal{L}(\Theta;p^{\text{old}}) \leq \alpha$ or repeated for $\kappa$ iterations}\label{line: convergence}
    {
        \STATE \textbf{Break}
    }
    \ENDIF
   % \ENDWHILE \label{line:inner_while_2}
   % \STATE \texttt{\(\triangleright\) Update $\Theta$ with fixed $p$}
   % \STATE Train $\Theta$ for $\tau$ iterations with $\{\overset{\rightarrow}{q_i}\}_{i=1}^n$
   % by using Algorithm \ref{} for retrieval or Algorithm \ref{} for RAG.
   % \COMMENT{Optimize Query Decomposer}
   % \STATE Evaluate the retrieval score or RAG score $s$ over $D$ with $\mathcal{Q}$. 
   % \STATE Input $s$ into $M_2$ which outputs an improved prompt for $M_1$
   \ENDWHILE
   \RETURN $p$ and decomposed sub-queries
% \STATE Train $\Theta$ until convergence with $\{\overset{\rightarrow}{q_i}\}_{i=1}^n$. \label{line: last}
\end{algorithmic}
\end{algorithm}    
% \vspace{-0.1in}
\end{small}

% Generally speaking, this pipeline consists of the following two steps, leveraging two LLMs respectively, one for performing query decomposition (referred to as Query Decomposer) and the other for generating an appropriate prompt for the former LLM (referred to as Prompt Optimizer). 
% The generation of the decomposed sub-queries consists of the following two steps.
% prompt optimization (referred to as {\em Prompt Optimizer}) while the other one for Query decomposition (referred to as {\em Query Decomposer}). 

Given a retrieval-based system with a fixed parameter $\Theta$, we elaborate on how to search optimal decomposed sub-queries in this section. Recall that solving this optimization problem is not differentiable, we overcome this challenge by leveraging Algorithm \ref{alg:opt_prompt}, which iteratively executes the following two steps. The goals of these two steps are to generate a candidate prompt for the {\em Query Decomposer} by invoking the {\em Prompt Optimizer}, and to evaluate the quality of the prompt with the training loss $\mathcal{L}(\Theta;p)$ respectively. 

\textbf{Step 1:} By following \cite{yang2024large}, the Prompt Optimizer aims to 
% is thoroughly optimized, it would then 
produce a prompt prefix $p_0$, say, ``Design a query decomposition framework that...'' as shown in Figure \ref{fig:forward}, which is then concatenated with a fixed prompt template, including the description of the query decomposition task and one input query $Q$, to construct a complete prompt $p$ for the Query Decomposer. Hence, searching optimal $p$ is equivalent to searching optimal prompt prefix $p_0$. The generation of one candidate prompt prefix $p_0$ is conducted by prompting the Prompt Optimizer with 
% The generation of the prompt prefix $p_0$ requires 
two pieces of meta-prompts and a dynamically constructed solution-score pair list (see Figure \ref{fig:forward}). This list is initially empty and then gradually populated with the pairs of the prompt prefix $p_0$ produced by the Prompt Optimizer and the corresponding training loss $\mathcal{L}(\Theta;p)$ as Algorithm \ref{alg:opt_prompt} is executed. Intuitively speaking, $p_0$ is regarded as the {\em solution} to this optimizer while $\mathcal{L}(\Theta;p)$ is viewed as the {\em score} of this solution. 

% we propose to repeat the above process by incorporating this sub-optimal solution-score pair $(p, \mathcal{L}(\Theta;p))$ to the prompt of Prompt Optimizer such that it can generate a new prefix $p$.

% This training loss is evaluated over a set of training queries $\mathcal{Q}^{\text{train}}$ with their decomposed sub-queries produced by prompting the Query Decomposer with the prompt $p$. 

\textbf{Step 2}: To construct the above solution-score pairs, in particular, attaining the training loss $\mathcal{L}(\Theta;p)$ for one candidate prompt $p$, we thus prompt the Query Decomposer with $p$ to generate the decomposed sub-queries for each query in the training set. These sub-queries are 
% separated by the delimiter `\textbar{}'. The decomposed sub-queries are 
then used to perform MVR in the downstream retrieval-based system and evaluate the training loss $\mathcal{L}(\Theta;p)$ on all training queries. Then the pair $(p, \mathcal{L}(\Theta;p))$ is appended to the solution-score pair list as shown in Line \ref{line: add solution-score pair}. 

According to \cite{yang2024large}, as more solution-score pairs are included from prior iterations of Algorithm \ref{alg:opt_prompt}, the Prompt Optimizer can gradually refine the prompt $p$ for the Query Decomposer which may produce smaller training loss. In the end, Algorithm \ref{alg:opt_prompt} terminates if the training loss with the updated prompt, $\mathcal{L}(\Theta;p)$, is at least smaller than that with the initial prompt $p^{\text{old}}$ by $\alpha$ or the while loop is repeated for $\kappa$ iterations (see Line \ref{line: convergence} in Algorithm \ref{alg:opt_prompt}). 

% \paragraph{Discussion} 
% As described above, the Prompt Optimizer is adapted from the vanilla LLM-based optimizer described in Section \ref{sec: sub-query-training} by reusing most of its prompt template. The only difference is that the description of the problem to be solved is not included in the prompt for the Prompt Optimizer. However, the empirical results suggest that ignoring this piece of information has no impact on the overall performance. 

% \paragraph{Discussion} 
Note that the Query Decomposer may hallucinate, in particular, the generated sub-queries may contain tokens that do not exist in the input query. To mitigate this, we filter out irrelevant tokens in these sub-queries. The effect of this filtering step is empirically evaluated in Appendix \ref{sec: addition_ablation}.
\begin{small}
\begin{algorithm}[!h]
\small
   \caption{Training \ours}
   \label{alg:training}
\begin{algorithmic}[1]
   \STATE {\bfseries Input:} A set of training queries: $\mathcal{Q}^{\text{train}}$, a retrieval-based system parameterized by $\Theta$. 
   % \STATE Initialize $best\_L=\infty$
   \STATE Initialize one random $p^{\text{old}}$.
    % \STATE Generate a prompt prefix $p_0$ with \textbf{Step 1} in Section \ref{sec: sub-query-generation}. \label{line: 1}
   % \STATE Initialize 
   % \STATE Randomly initialize a prompt prefix $p_0$ and $\Theta$
   \WHILE{not converge}
   % \STATE $p^{\text{old}} \leftarrow p^{\text{new}}$
   % \STATE \texttt{\(\triangleright\) Update prompt $p$ with fixed $\Theta$}
   \STATE Invoke Algorithm \ref{alg:opt_prompt} by inputting $p^{\text{old}}$ to obtain a new prompt $p^{\text{new}}$ and optimized sub-queries. \label{line:update_prompt}
   \IF{$p^{\text{new}} == p^{\text{old}}$}
        \STATE \textbf{Break}
   \ENDIF
   % such that the training loss $\mathcal{L}(\Theta;p)$ is at least smaller than $best\_L$ by $\alpha$. 
   % \STATE Initialize the solution-score pairs $LS=[]$\label{line:inner_while_1}
   % \WHILE{not converge}
   %    \STATE Prompt the {\em Prompt Optimizer} by leveraging $LS$ to generate a new prompt $p$ with \textbf{Step 1} of Section \ref{sec: sub-query-generation}.
   %    % \STATE Construct a complete prompt $p$ by leveraging $p_0$
   %    \FOR{For each $Q_i$ in $\mathcal{Q}^{\text{train}}$}
   % \STATE Generate decomposed sub-queries $\overset{\rightarrow}{q_i}$ for $Q_i$ by prompting the {\em Query Decomposer} with $p$.
   % % with \textbf{Step 2} of Section \ref{sec: sub-query-generation}. 
   % % \STATE Decompose each $Q$ into a set of sub-queries  with the prompt $p$ for the 
   % \ENDFOR \label{line: 2}
   %    % \ENDIF
   %  \STATE Evaluate the training loss on the retrieval-based system, $\mathcal{L}(\Theta;p)$ over $\mathcal{Q}^{\text{train}}$ and its decomposed sub-queries $\{\overset{\rightarrow}{q_i}\}_{i=1}^n$.
   %    \IF{$best\_H - \mathcal{L}(\Theta;p) > \alpha$}\label{line:update_prompt}
   %    \STATE $best\_H = \mathcal{L}(\Theta;p)$ and break
   %    \ENDIF
   %  \STATE Add $(p,\mathcal{L}(\Theta;p))$ to $LS$
   % \ENDWHILE \label{line:inner_while_2}
   % \STATE \texttt{\(\triangleright\) Update $\Theta$ with optimized sub-queries}
   \STATE Train $\Theta$ for $\tau$ iterations with optimized sub-queries by minimizing $\mathcal{L}(\Theta;p^{\text{new}})$ with $p^{\text{new}}$ fixed. \label{line: train_theta}
   \STATE $p^{\text{old}} \leftarrow p^{\text{new}}$
   % \STATE Update $best\_L$ with $\mathcal{L}(\Theta;p)$. \label{line: record_loss}
   % by using Algorithm \ref{} for retrieval or Algorithm \ref{} for RAG.
   % \COMMENT{Optimize Query Decomposer}
   % \STATE Evaluate the retrieval score or RAG score $s$ over $D$ with $\mathcal{Q}$. 
   % \STATE Input $s$ into $M_2$ which outputs an improved prompt for $M_1$
   \ENDWHILE
\STATE Train $\Theta$ until convergence with optimized sub-queries by minimizing $\mathcal{L}(\Theta;p^{\text{new}})$ with $p^{\text{new}}$ fixed. \label{line: last}
\end{algorithmic}
\end{algorithm}    
% \vspace{-0.1in}
\end{small}

\subsection{End-to-end training algorithm}\label{sec: sub-query-training}
Note that in Section \ref{sec: sub-query-generation}, we optimize the prompt with a fixed $\Theta$. Indeed, the sub-queries produced by Algorithm \ref{alg:opt_prompt} impact the input to $\Theta$, thus motivating the need to further update $\Theta$. As a consequence, we propose an end-to-end training algorithm outlined in Algorithm \ref{alg:training}. This algorithm aims to alternatively optimize the prompt $p$ for the query decomposer and train $\Theta$ until convergence. 
% However, in systems like RAG systems. their model parameters are iteratively updated during the training process, we therefore need to 
% -;We summarize the end-to-end training algorithm for alternatively optimizing the prompt $p$ and $\Theta$ in Algorithm \ref{alg:training}. This algorithm starts by updating the prompt $p$ by invoking Algorithm \ref{alg:opt_prompt} when the model parameter $\Theta$ is fixed. 

% Specifically, given a fixed $\Theta$, this algorithm first generates a prompt $p$ for the Query Decomposer by following Step 1 of Section \ref{sec: sub-query-generation}. We then follow Step 2 to prompt the Query Decomposer with $p$ to decompose an input query $Q$ into sub-queries. 

Note that at each iteration of Algorithm \ref{alg:training}, we could optionally train $\Theta$ until convergence given sub-queries produced by Algorithm \ref{alg:opt_prompt}. However, in many retrieval-based systems such as RAG systems, performing full training on $\Theta$ could be highly expensive. For instance, training the RAG model with a large image QA dataset takes up to 1 hour per epoch as revealed in Section \ref{sec: exp}, which usually needs at least 5 epochs to converge. Hence, in Algorithm \ref{alg:training}, we alternatively optimize the prompt and sub-queries in Line \ref{line:update_prompt} and update $\Theta$ for $\tau$ iterations with the optimized sub-queries in Line \ref{line: train_theta}. 
% In addition, the sub-queries produced by Algorithm \ref{alg:opt_prompt} may not sufficiently optimize the training loss, $\mathcal{L}(\Theta;p)$, thus potentially leading to sub-optimal results at convergence (as analyzed in Section \ref{sec: theory}). To alleviate this issue, we first keep track of the training loss at the end of each iteration (denoted by $best\_L$) at Line \ref{line: record_loss} of Algorithm \ref{alg:training}. We then configure Algorithm \ref{alg:opt_prompt} such that it converges when the loss $\mathcal{L}(\Theta;p)$ is smaller than that in the previous iteration by at least $\alpha$. But if this cannot be achieved within a certain time threshold, then it implies that the prompt $p$ cannot be further optimized, and thus the for loop in Algorithm \ref{alg:training} terminates. 
This is repeated until the prompt cannot be updated anymore. In the end, we optimize the loss $\mathcal{L}(\Theta;p)$ with a fixed $p$ until convergence, resulting in an optimized parameter $\Theta^{*}(p)$ (see Line \ref{line: last} of Algorithm \ref{alg:training}). We use the notation $\Theta^{*}(p)$ to denote its dependency on the prompt $p$. % with the optimized prompt $p$ until convergence 

% is likely to increase or can just maintain approximately the same as the best training loss so far. To mitigate this, 
 % This can further prompt the Query decomposer to produce new decomposed sub-queries. This process terminates until the training loss $\mathcal{L}(\Theta;p)$ is smaller than the best performance loss so far by at least $\alpha$. This thus indicates that $p$ has been sufficiently optimized with the current $\Theta$. 
% We then accept this new prompt $p$ for the following training process for the parameter $\Theta$.

% Subsequently, with a fixed prompt $p$, as suggested by Step 2 of Section \ref{sec: sub-query-generation}, we can decompose a query $Q$ into a set of sub-queries by prompting the Query Decomposer with $p$. With these sub-queries, we then optimize the parameter of the IR or RAG system by optimizing the objective defined in Equation \eqref{eq: IR_obj} or Equation \eqref{eq: RAG_obj}. This training process only lasts for $\tau$ iterations ($\tau$ is a small number) to ensure high training efficiency. 
% % which can potentially further optimize the training loss $\mathcal{L}(\Theta;p)$. 
% Overall, this algorithm alternatively updates $p$ and $\Theta$ as described above until the training loss $\mathcal{L}(\Theta; p)$ cannot be further minimized. 

Note that 
in Algorithm \ref{alg:opt_prompt}, 
% it is critical to choose appropriate $\tau$ and $\alpha$. As mentioned above, a larger  $\tau$ means higher training cost. On the other hand, 
% given an optimized prompt $p$, 
% % when the number of training steps $\tau$ is large enough, 
% the model parameter $\Theta$ would eventually converge to a local minimum 
% % updating the parameter $\Theta$ beyond $\tau$ iterations can converge 
% to . However, a large $\tau$ would inevitably slow down the overall training process since 
% the process of training $\Theta$ for $\tau$ iterations would repeat multiple times. Therefore, in our implementation, $\tau$ is configured as a small number, say \yinjun{add a number here}. But this 
% a smaller $\tau$ may imply insufficient updates over the parameter $\Theta$. Likewise, 
the training loss gets reduced by $\alpha$
% as Algorithm \ref{alg:opt_prompt}
% Line \ref{line:update_prompt} of Algorithm \ref{alg:training} 
% suggests 
when the prompt is updated from $p^{\text{old}}$ to $p^{\text{new}}$. However, this may not necessarily guarantee decreased training loss at convergence, i.e., $\mathcal{L}(\Theta^{*}(p^{\text{old}});p^{\text{old}}) > \mathcal{L}(\Theta^{*}(p^{\text{new}});p^{\text{new}})$, which is critical to ensure the optimality of the derived prompt $p^{\text{new}}$. Otherwise, it would be meaningless to update this prompt.  Hence, in Section \ref{sec: theory}, we provide a rigorous theoretical analysis to show that the above inequality holds with appropriate $\alpha$ and $\tau$ without hurting training efficiency.
% , in which $\tau$ could be small for guaranteeing training efficiency. 
% However, through a rigorous theoretical analysis as shown in Section \ref{sec: theory}, we can demonstrate that this claim is correct with an appropriate $\tau$. 

\subsection{Theoretical analysis}\label{sec: theory}
In this sub-section, before formally presenting the theoretical results, we list some essential assumptions below.
% before formally presenting our theoretical results, we first 
% we provide a rigorous theoretical analysis of the convergence and optimality of Algorithm \ref{alg:training}. Throughout this analysis, we  and $p$ derived by Algorithm \ref{alg:training}, in which the loss function $\mathcal{L}(\Theta; p)$ could be the negative of the performance score $F(\Theta; p)$. This analysis relies on some essential assumptions on the model parameter $\Theta$, which are listed below.
\begin{assumption}[$\mu$-PL* condition and $L$-smoothness]\label{assm: convex_smooth} 
$\mathcal{L}(\Theta; p)$ satisfies the $\mu$-Polyak-Łojasiewicz star ($\mu$-PL*) condition \cite{liu2022loss} and $L$-smoothness for any $\Theta$ with a given $p$, i.e.,:
\begin{small}
\begin{align*}
&\|\nabla_\Theta \mathcal{L}(\Theta;p)\|_2^2 \geq \mu \mathcal{L}(\Theta;p) \tag{\textbf{PL* condition}}, \\
    % &\mathcal{L}(\Theta_2;p) \geq \mathcal{L}(\Theta_1;p) + \nabla_\Theta \mathcal{L}(\Theta_1;p)(\Theta_2-\Theta_1) + \frac{\mu}{2}\|\Theta_2-\Theta_1\|_2^2
% \end{align*}    
% \end{small}
% In addition, $\mathcal{L}(\Theta; p)$ is $L$-smooth with any fixed $p$ ($L > 0$), i.e.,:
% \begin{small}
% \begin{align*}
    \mathcal{L}(\Theta_2;p) & \leq \mathcal{L}(\Theta_1;p) + \nabla_\Theta \mathcal{L}(\Theta_1;p)(\Theta_2-\Theta_1) + \frac{L}{2}\|\Theta_2-\Theta_1\|_2^2 \tag{\textbf{L-smoothness}}
\end{align*}    
\end{small}
\end{assumption}

Indeed, according to recent theoretical results \cite{liuoptimization}, for pre-trained over-parameterized large language models, if they are fine-tuned with the state-of-the-art optimization method, GaLore \cite{zhao2024galore}, their training loss satisfies the above PL* condition.

\begin{assumption}[Bounded loss]
For arbitrary $\Theta$ and $p$, the loss $\mathcal{L}(\Theta; p)$ is upper bounded by a constant $M$, i.e.:
\begin{small}
    \begin{align*}
        \mathcal{L}(\Theta; p) \leq M.
    \end{align*}
\end{small}
\end{assumption}

\begin{assumption}[Gradient Descent updates]
Suppose updating $\Theta$ in $F(\Theta;p)$ with a fixed $p$ is performed through gradient descent, i.e.,
\begin{align}\label{eq: gd}
    \Theta_{t+1} = \Theta_{t} - \eta \nabla \mathcal{L}(\Theta_t; p),
\end{align}
in which $\eta$ is the learning rate. 
\end{assumption}

% In what follows, we will analyze how $\Theta^{*}(p)$ and the training loss $\mathcal{L}(\Theta;p)$ vary with the iteratively updated prompt $p$ throughout the training algorithm. 
% \yinjun{talk about convergence here}

% whenever a new $p^{\text{new}}$ is accepted, the performance score $F(\Theta^{*}(p^{\text{new}});p^{\text{new}})$ should be better than that with the old prompt $p^{\text{old}}$. This can thus guarantee that \ours\ can produce superior performance in comparison to that without optimizing the decomposed sub-queries.  However, note that in Line \ref{} of Algorithm \ref{alg:training}, we accept a new prompt $p^{\text{new}}$ only when the performance score $F(\Theta;p)$ is improved by $\alpha$. This may not necessarily guarantee that $F(\Theta^{*}(p^{\text{new}});p^{\text{new}})$ is better than $F(\Theta^{*}(p^{\text{old}});p^{\text{old}})$. But the correctness of this claim can be justified with the following theorem:
Given the above assumptions, the following theorem holds.
\begin{theorem}\label{thm: converge1}
Suppose the prompt $p^{\text{old}}$ is updated to $p^{\text{new}}$ in Line \ref{line:update_prompt} of Algorithm \ref{alg:training}, then 
the following inequality holds:
% the following inequality on $\mathcal{L}(\Theta_t; p)$ holds:
\begin{small}
\begin{align*}
    \begin{split}
        \mathcal{L}(\Theta^{*}(p^{\text{old}});p^{\text{old}}) - \mathcal{L}(\Theta^{*}(p^{\text{new}});p^{\text{new}}) \geq \alpha - (1-\frac{\mu}{2L})^{\tau}M,
    \end{split}
\end{align*}
\end{small}
in which $\mathcal{L}(\Theta^{*}(p^{\text{old}});p^{\text{old}})$ and $\mathcal{L}(\Theta^{*}(p^{\text{new}});p^{\text{new}})$ denote the converged training loss when the prompt is fixed to $p^{\text{old}}$ and $p^{\text{new}}$ respectively. 
\end{theorem}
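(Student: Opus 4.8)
The plan is to relate the two \emph{converged} losses $\mathcal{L}(\Theta^{*}(p^{\text{old}});p^{\text{old}})$ and $\mathcal{L}(\Theta^{*}(p^{\text{new}});p^{\text{new}})$ to a single common reference point, namely the parameter $\Theta_0$ that is held fixed at the moment Algorithm \ref{alg:opt_prompt} is invoked from Line \ref{line:update_prompt} of Algorithm \ref{alg:training}. Four ingredients will be chained: (i) monotonicity of gradient descent, which upper-bounds a converged loss by the loss at any earlier iterate; (ii) the termination guarantee of Algorithm \ref{alg:opt_prompt}, read as ``the loss is reduced by at least $\alpha$'', so that $\mathcal{L}(\Theta_0;p^{\text{new}}) \leq \mathcal{L}(\Theta_0;p^{\text{old}}) - \alpha$ whenever the prompt is genuinely updated; (iii) the linear contraction implied by the $\mu$-PL* condition together with $L$-smoothness; and (iv) non-negativity of the loss. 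The crucial structural observation is that, by the bookkeeping of Algorithm \ref{alg:training}, the parameter $\Theta_0$ in force when Algorithm \ref{alg:opt_prompt} is called with $p^{\text{old}}$ is exactly the result of the $\tau$ gradient steps taken under that \emph{same} prompt $p^{\text{old}}$ during the previous pass through Line \ref{line: train_theta}; this is what will make the bound feature $(1-\tfrac{\mu}{2L})^{\tau}M$ rather than the trivial $M$.

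Next I would establish the per-step descent lemma. Taking the learning rate $\eta = 1/L$, $L$-smoothness gives $\mathcal{L}(\Theta_{t+1};p) \leq \mathcal{L}(\Theta_{t};p) - \tfrac{1}{2L}\|\nabla_\Theta \mathcal{L}(\Theta_t;p)\|_2^2$, and substituting the PL* inequality $\|\nabla_\Theta \mathcal{L}\|_2^2 \geq \mu \mathcal{L}$ yields the contraction $\mathcal{L}(\Theta_{t+1};p) \leq (1-\tfrac{\mu}{2L})\mathcal{L}(\Theta_t;p)$. Iterating this over the $\tau$ steps that produced $\Theta_0$ and applying the bounded-loss assumption gives
\begin{small}
\begin{equation*}
\mathcal{L}(\Theta_0;p^{\text{old}}) \leq \Big(1-\tfrac{\mu}{2L}\Big)^{\tau}M .
\end{equation*}
\end{small}

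Finally I would assemble the chain. From (i) and (ii), $\mathcal{L}(\Theta^{*}(p^{\text{new}});p^{\text{new}}) \leq \mathcal{L}(\Theta_0;p^{\text{new}}) \leq \mathcal{L}(\Theta_0;p^{\text{old}}) - \alpha$, while (iv) gives $\mathcal{L}(\Theta^{*}(p^{\text{old}});p^{\text{old}}) \geq 0$. Subtracting and using the bound on $\mathcal{L}(\Theta_0;p^{\text{old}})$,
\begin{small}
\begin{align*}
\mathcal{L}(\Theta^{*}(p^{\text{old}});p^{\text{old}}) - \mathcal{L}(\Theta^{*}(p^{\text{new}});p^{\text{new}})
&\geq 0 - \big(\mathcal{L}(\Theta_0;p^{\text{old}}) - \alpha\big) \\
&= \alpha - \mathcal{L}(\Theta_0;p^{\text{old}})
\geq \alpha - \Big(1-\tfrac{\mu}{2L}\Big)^{\tau}M ,
\end{align*}
\end{small}
which is exactly the claimed inequality.

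The main obstacle is conceptual rather than computational: one must identify the correct reference parameter $\Theta_0$ and recognize that it has \emph{already} undergone $\tau$ descent steps under $p^{\text{old}}$, so that its loss sits within $(1-\tfrac{\mu}{2L})^{\tau}M$ of the (non-negative) optimum; bounding $\mathcal{L}(\Theta_0;p^{\text{old}})$ crudely by $M$ would only yield the far weaker $\alpha - M$. Two secondary points to pin down are the sign convention of the threshold in Line \ref{line: convergence}, which must be interpreted so that a genuine update reduces the fixed-$\Theta$ loss by at least $\alpha$, and the implicit choice $\eta = 1/L$, without which the contraction factor would differ; I would also flag the first-iteration edge case, where $\Theta_0$ is the random initialization and the tighter pre-training bound does not yet apply.
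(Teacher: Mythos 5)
Your proof is correct and takes essentially the same route as the paper's: the linear contraction implied by PL* plus $L$-smoothness with $\eta = 1/L$, applied to the $\tau$ gradient steps taken under $p^{\text{old}}$ before the prompt update, then chained with the $\alpha$-reduction from Algorithm \ref{alg:opt_prompt}'s termination test, the bound $M$, and non-negativity of the converged loss. The only cosmetic differences are that you contract the raw loss $\mathcal{L}(\Theta_{t+1};p) \leq (1-\tfrac{\mu}{2L})\mathcal{L}(\Theta_t;p)$ and discard $\mathcal{L}(\Theta^{*}(p^{\text{old}});p^{\text{old}}) \geq 0$ at the start (the paper contracts the excess loss and lets that term cancel at the end), and that you explicitly flag the first-iteration edge case, which the paper silently ignores.
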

In the above theorem, the term $(1-\frac{\mu}{2L})$ is a constant between 0 and 1. Therefore, we can configure $\tau$ such that $\alpha - (1-\frac{\mu}{2L})^{\tau}M$ is a positive value, e.g., $\frac{1}{2}\alpha$ by setting $\tau=\log_{1-\frac{\mu}{2L}}(\frac{\alpha}{2M})$. As mentioned in Section \ref{sec: sub-query-training}, we configure $\tau$ as 3 by default which strikes a balance between the training efficiency and performance as empirically verified by Section \ref{sec: exp}. 

This theorem states that with appropriate $\tau$ and $\alpha$, Algorithm \ref{alg:training} can effectively optimize the prompt $p$ for decomposing sub-queries at a reasonable training cost, thus achieving superior downstream performance than that by employing other query decomposition strategies. 
% With a large enough $\tau$, the right-hand side of the formula in the above theorem approaches $\alpha$, meaning that there exists a large enough gap between $\mathcal{L}(\Theta^{*}(p^{\text{old}});p^{\text{old}})$ and $\mathcal{L}(\Theta^{*}(p^{\text{new}});p^{\text{new}})$. 
The complete proof of this theorem is provided in Appendix \ref{appendix: theorem}. 

% The above theorem thus suggests that 
% when the prompt $p^{\text{old}}$ is updated to $p^{\text{new}}$ in the outermost loop of Algorithm \ref{alg:training}, optimizing $\Theta$ leads to a smaller training loss than that that with the old prompt $p^{\text{old}}$. As a consequence, 
% when the prompt $p$ and the parameter $\Theta$ are iteratively updated in Algorithm \ref{alg:training}, the training loss $F(\Theta;p)$ can eventually converge to a small value. 

% \paragraph{Discussion} It is worth noting that Theorem \ref{thm: converge1} relies on one critical assumption that the training loss is strongly convex. This may not be true since complex neural network models are usually used in real RAG tasks. However, our empirical studies demonstrate the effectiveness of \ours\ even when $\Theta$ is the parameter of neural network models. 

% However, suppose in the subsequent training process for updating $\Theta$ can eventually converge to $\Theta^*$ this may not necessarily imply that updating $\Theta$ in in the subsequent training process with this new $p$ can converge to the score  $F(\Theta^*; p)$ which is better than that with old prompt $p^{old}$
% the subsequent training process for the parameter $\Theta$ can converge to the performance scores in 

% To prove that Algorithm \ref{alg:training} can converge to a 
% At each step of updating LLM prompts $p \rightarrow p'$, $F(\Theta; p) < F(\Theta; p')$

% $$F(\Theta_{t+1};p) \leftarrow F(\Theta_{t};p) - \eta \nabla_{\Theta} F(\Theta_{t};p) $$

\section{Experiments}\label{sec: exp}

\subsection{Experimental setup}\label{sec: exp_setup}
\paragraph{Baseline} We compare \ours\ against the following query decomposition methods from prior studies:
\begin{itemize}[leftmargin=*,nolistsep, nosep]
\item Conventional \textbf{dense retrieval} encodes each query and document with one single embedding. 
\item \textbf{ColBERT} \cite{khattab2020colbert} which decomposes queries into individual tokens. 
\item \textbf{Supervised Query Decomposition} (\textbf{S-QD} for short): A series of works \cite{xue2024enhancing, yang2021exploring,zhou2022learning, zhu2023chain,guo2022complex} train a sequence-to-sequence model in a supervised manner to generate decomposed sub-questions for each question. 
% such that they can match the human-annotated sub-questions. 
We follow \cite{zhou2022learning, zhu2023chain,guo2022complex,wu2024gendec} to fine-tune Llama3.1-8B with StrategyQA dataset \cite{geva2021strategyqa} which contains human-annotated sub-queries. 
\item \textbf{Unsupervised Query Decomposition} (\textbf{U-QD} for short): 
% Considering the lack of human-annotated sub-queries for multi-vector retrieval, 
This aims to train a query decomposition model in an unsupervised manner. The representative method is OUNS \cite{perez2020unsupervised} which aims to identify sub-queries that are similar to original questions but also diverse enough.
\item \textbf{In-Context Learning-based Query Decomposition} (\textbf{ICL-QD} for short): Some recent works \citep{li2024framework, pereira2023visconde, niu2023bridging, ye2023large,xuedecompose,wu2024divide, chen2024complex,bhattacharya2023context} prompt LLMs to perform in-context learning for query decompositions with manually crafted prompts. These prompts are included in Appendix \ref{sec: prompts_for_methods}. 
% Unlike \ours, there is no optimization on prompts for query decomposition. 
\item \textbf{In-Context Learning with Feedback for Query Decomposition} (\textbf{ICLF-QD}): Some recent works \cite{qi2023art,gao2024dr3,sidhoum2024scoring} improve ICL-QD by providing feedback to LLMs regarding the quality of the decomposed sub-queries. In particular, we follow \cite{qi2023art} to evaluate whether a sub-query is relevant to the retrieved document or not. This is for determining whether to further decompose this sub-query. The prompts used in this method are included in Appendix \ref{sec: prompts_for_methods}.
% If not, this sub-query is further decomposed through ICL-QD until its relevance to the retrieved document is high enough.  
\end{itemize}

\paragraph{Datasets and models} 
% As introduced in Section \ref{sec: intro}, we primarily conduct experiments on RAG-based QA tasks covering both image QA and text QA tasks, in which we evaluate both the retrieval performance and end-to-end QA accuracy. 
% For IR tasks, we follow \cite{lee2024rethinking,ni2022large} to train a multi-vector retrieval model on MS MARCO dataset \cite{nguyen2016ms} and then perform zero-shot evaluation on 13 datasets from BEIR \cite{thakurbeir}. On the other hand, 
% regarding the RAG-based QA tasks, we evaluate \ours\ on both single-hop QA tasks and multi-hop QA tasks. 
% Specifically, 
We employ {\bf Web Questions} ({\bf WebQA}) \cite{berant2013semantic, WebQA21}, {\bf MultiModalQA} \cite{talmormultimodalqa}, {\bf ManyModalQA} \cite{hannan2020manymodalqa} and {\bf StrategyQA} \cite{geva2021did} dataset for experiments. Among these datasets, the former three include questions requiring retrieval from multi-modal data. We focus on two RAG-based QA tasks throughout the experiments, i.e., image QA and text QA. For image QA, we select only questions requiring image retrieval from WebQA, MultiModalQA, and ManyModalQA. For text QA, we select only questions requiring text documents from all of these four datasets. Notably, StrategyQA is used for multi-hop QA, while the others only support single-hop QA.
% Among these datasets, WebQA, MultiModalQA and ManyModalQA contain questions that may require retrieving data from different modalities. Hence, for image QA, in particular, for WebQA, MultiModalQA and ManyModalQA dataset, we only select questions requiring retrieving images while for text QA, in particular, for WebQA and StrategyQA, we only select questions requiring text documents as input. Among these datasets, StrategyQA is employed for multi-hop QA while others are used for single-hop QA. 
% \yinjun{filter out questions from irrelevant modalities} are for Image QA while StrategyQA is for text QA. Regarding WebQA, its questions involve both text documents and images \yinjun{how to describe this?}. 
% {\bf Natural Questions} ({\bf NQ}) \cite{kwiatkowski2019natural},  SQuAD and TriviaQA dataset for Single-hop QA while  for multi-hop QA. 

Regarding the retrieval process, it is critical to determine which embedding model to use. 
% we employ the pre-trained model developed by \cite{santhanam2022colbertv2} for ColBERT.
% % , which fine-tunes the Bert model \cite{devlin2018bert} on the MSMARCO dataset \cite{nguyen2016ms}. 
% Since ColBERT decomposes queries at the token level, which may not be suitable for methods that decompose queries into sub-queries. 
For text QA, we employ the Sentence-Bert model \cite{reimers2019sentence} by default for encoding sub-queries and corpus for other baseline methods as well as \ours. 
% The Sentence-Bert model is also built on top of Bert and can thus ensure fair comparison against ColBERT. 
On the other hand, for image QA, the CLIP model \cite{radford2021learning} is employed as the default model for embedding text queries and image corpus. In Section \ref{sec: ablation}, we further perform ablation studies on the retrieval model.  But note that ColBERT and its counterpart for image retrieval, ColPali \cite{faysse2024colpali}, have their own encoding models. Hence, we report the results of two versions of ColBERT, one taking its own embedding model (denoted by ColBERT-orig) while the other leverages the same default embedding model as others. 
% encoding queries and documents across all methods. 

For the generator models, we leverage Llama3.1-8B \cite{dubey2024llama} and Llava-v1.5-7B \cite{liu2024visual} as generators for single-hop text QA and image QA, respectively.  In the experiments, only these generator models are fine-tuned while keeping the retrieval models frozen.
% in our experiments for baseline methods while \ours\ jointly fine-tunes the generator model and optimizes the prompt for query decomposition. 
On the other hand, regarding multi-hop text QA, i.e., StrategyQA dataset, we follow the state-of-the-art \cite{xu2024search} to utilize the frozen GPT-4 \cite{achiam2023gpt} model and merely replace its default retrieval method with baseline methods and \ours.

Throughout the experiments, the default values of $\alpha$, $\tau$ and $\kappa$ are configured as 0.02, 3 and 5, respectively. Regarding the configuration for the retrieval process, we retrieve the Top-1 most relevant images and the Top-2 most relevant documents in the image QA and text QA tasks, respectively. More details on experimental setup are provided in Appendix \ref{sec: addition_exp_setup}, which include how to decompose and embed documents or images.

\begin{table*}[!h]
\small
    \centering
\begin{tabular}{c|ccc|ccc}\toprule
&\multicolumn{3}{c|}{\textit{Image QA}} & \multicolumn{3}{c}{\textit{Text QA}} \\
% Dataset & \multicolumn{2}{c}{WebQA} & \multicolumn{2}{c}{NQ} & \multicolumn{2}{c|}{SQuAD} & \multicolumn{2}{c}{StrategyQA} \\ 
Dataset & WebQA & MultiModalQA &  
ManyModalQA & WebQA & MultiModalQA &  
ManyModalQA
% & \multicolumn{2}{c}{StrategyQA}
\\ \midrule
& Hit@1 & Hit@1 & Hit@1 & Hit@2 & Hit@2 & Hit@2 \\ \midrule
% &MS&AR&TO&FE&CF&SF&CV&NF&NQ&HQ&FQ&SD&DB&QU \\ \toprule
Dense retrieval & 41.38 & 50.00 & 27.38 & 43.52  & 66.44 & 49.25\\
% & 46.72 & 52.60 & 53.04 & 57.39 & 61.83 & 76.73 & 80.58 & 93.53\\
ColBERT & 11.98 & 25.22 & 16.30 & 50.72 & 40.92 & 40.37 \\
ColBERT-orig & 38.95 & 36.96& 21.05 & \underline{52.16} &\underline{79.89} &\underline{87.07}\\
S-QD & 40.93 & \underline{52.61} & \underline{28.15}& 48.56  &56.17 &68.07
% & 44.56 & 51.73 & 53.48 & 57.83 & 59.87 & 75.42 & 72.30 & 87.05
\\
U-QD & 33.89 & 51.74 & 26.86& 46.04 &45.21 &67.89 
% & 46.74 & 54.73 & 51.30 & 55.65 & 60.52 & 77.12 & 68.71 & 85.25
\\
ICL-QD & 39.39 & 54.34 & 27.76& 41.37 &71.43 &85.14 
% &  48.75 & 53.08  &  52.61 & 56.08 &  61.83 & 76.22  & 75.54  & 88.49 
\\ 
ICLF-QD & \underline{41.69} & 51.74 & 27.89& 51.80 &69.76 & 66.49
% & 47.52 & 52.73 & 53.04 & 57.39 & 63.53  & 77.65 & 80.57 & 93.52 
\\ \midrule
\ours & \textbf{42.33} &\textbf{58.26} & \textbf{28.67}& \textbf{53.24}  &\textbf{80.58} &\textbf{92.35}
% & 49.34 & 53.56 & 62.47 & 76.99 & 62.21 & 77.63 & 82.02 & 94.97 
\\
\bottomrule
\end{tabular}
\caption{Retrieval Accuracy on QA datasets}\label{Table: RAG_retrieval_res}
\end{table*}

\begin{table*}[!h]
\small
    \centering
\begin{tabular}{c|ccc|cccc}\toprule
&\multicolumn{3}{c|}{\textit{Image QA}} & \multicolumn{4}{c}{\textit{Text QA}} \\
Dataset & \multicolumn{1}{c}{WebQA} & \multicolumn{1}{c}{MultiModalQA} & 
\multicolumn{1}{c|}{ManyModalQA} & \multicolumn{1}{c}{WebQA} & \multicolumn{1}{c}{MultiModalQA} & 
\multicolumn{1}{c}{ManyModalQA} & \multicolumn{1}{c}{StrategyQA} \\ \midrule
% & Top-20 & Top-100 & Top-20 & Top-100 & Top-20 & Top-100 & Top-20 & Top-100 \\ \midrule
% &MS&AR&TO&FE&CF&SF&CV&NF&NQ&HQ&FQ&SD&DB&QU \\ \toprule
w/o RAG & 80.31 & 16.52 & 30.98 & 56.47 & 40.36 & 32.28& 58.76 \\
Dense retrieval & 81.43 & 46.09 & 34.12 & 59.35 & 59.36 & 41.25 & 61.43 \\
ColBERT  & 82.14 & 42.61 & 29.80 & 60.79 & 53.95 & 22.08 &  62.45 \\ 
ColBERT-orig & 81.96 &49.13 & 32.29 & \underline{61.14} &61.73 &\underline{77.66} & 65.50\\
S-QD & 81.73  &  \underline{48.70} & \underline{35.82} & 58.99 &54.92 &62.62 & \underline{73.33}\\
U-QD & 82.26 & 47.73  & 33.73 & 60.79 & 49.24 &60.95 & 71.11\\
ICL-QD & \underline{82.37} & 46.78 & 34.70 &  58.63  &61.86 &76.69 & 60.70 \\ 
ICLF-QD & 81.54 & 46.55 & 34.77 & 60.42 &\underline{63.52} &60.07 & 72.34\\  \midrule
\ours($\tau$=3)& \textbf{82.83} & \textbf{61.74} & \textbf{37.92} & \textbf{62.22} & \textbf{68.10}& \textbf{81.27}& \textbf{75.55} \\
\bottomrule
\end{tabular}
\caption{End-to-End QA (Exact Match) Accuracy. We bold the best and underline the second best accuracy number respectively}\label{Table: RAG_e2e_res}
\end{table*}

% single-hop QA:
% datasets:
% \begin{enumerate}
%     \item multi-modal QA dataset 2000 question, 
%     \item WebQuestions (WQ) dataset \url{https://github.com/brmson/dataset-factoid-webquestions} 5000-6000 questions
%     \item NQ 
%     \item SQuAD? TriviaQA?
% \end{enumerate}
\subsection{Quantitative results}
\begin{figure}[!t]
    \centering
    \includegraphics[width=\linewidth]{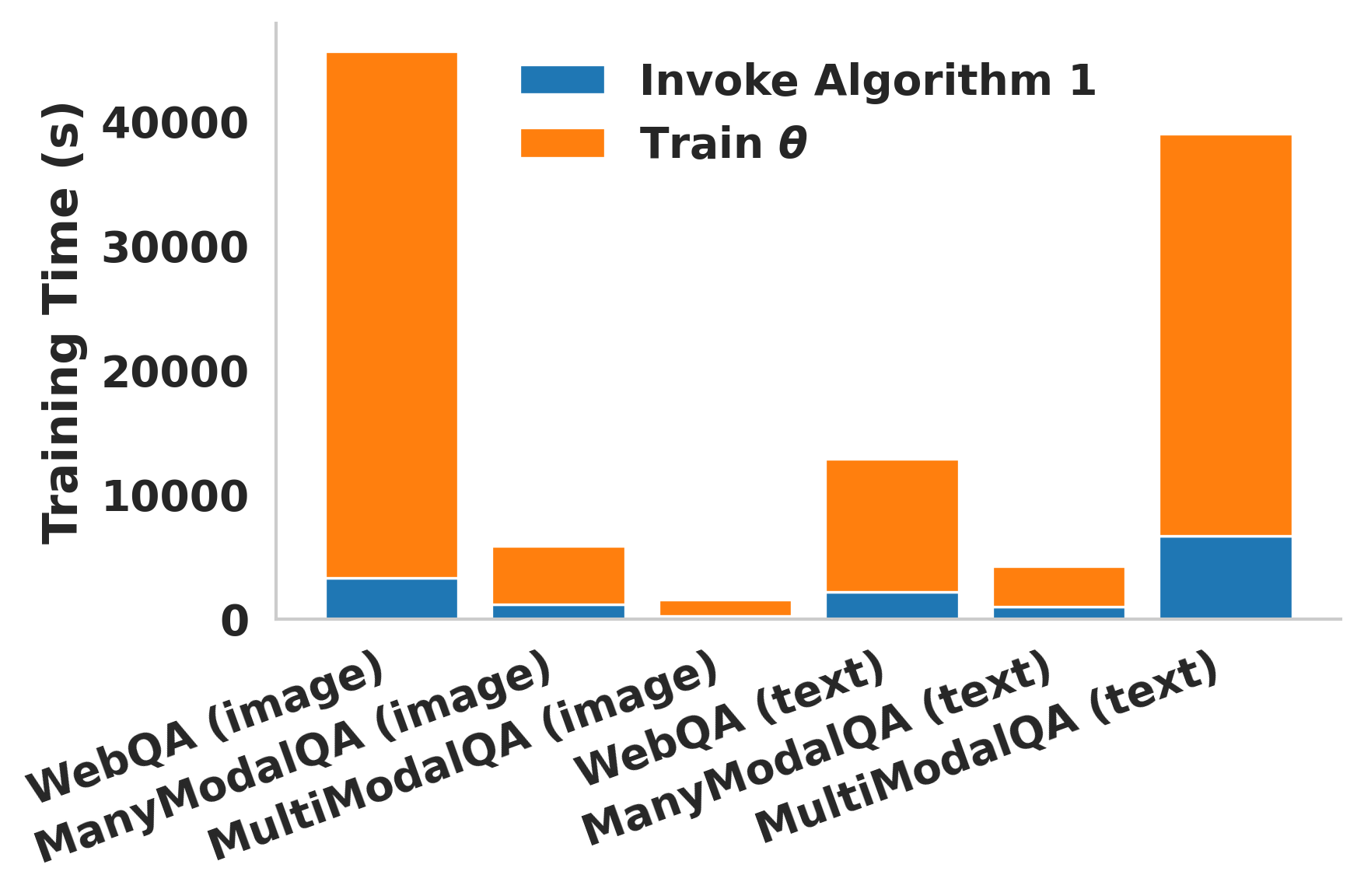}
    \vspace{-5mm}
    \caption{Overall training time of Algorithm \ref{alg:training}}
    \label{fig:training_time_full}
    \vspace{-5mm}
\end{figure}
\paragraph{Performance analysis} We perform end-to-end RAG training on the QA datasets introduced in Section \ref{sec: exp_setup}. For this experiment, we not only report the end-to-end QA accuracy in Table \ref{Table: RAG_e2e_res} but also compare the ground-truth relevant documents or images against the retrieved ones by \ours\ and baseline methods in Table \ref{Table: RAG_retrieval_res}. Regarding the retrieval accuracy metric, we report Hit@1 and Hit@2 (see the formal definition in \cite{croft2010search}) for image QA and text QA, respectively, since the Top-1 relevant image and Top-2 relevant text documents are retrieved in these two tasks, respectively. Note that since StrategyQA is primarily used for multi-hop QA in which the queries for retrieving documents are dynamically generated during the reasoning process, the ground-truth relevant documents are thus not available for this dataset. Hence, we do not report the retrieval accuracy for this dataset. 
% Instead, we conduct one qualitative study in Section \ref{sec: qualitative} to demonstrate the superior retrieval capability of \ours\ compared to baseline methods. 

As Table \ref{Table: RAG_retrieval_res} and Table \ref{Table: RAG_e2e_res} suggest, \ours\ outperforms all baseline methods in both the retrieval performance and end-to-end QA accuracy by a large margin across all datasets. Notably, the retrieval accuracy is increased by up to 5.28\% (see the last column in Table \ref{Table: RAG_retrieval_res}) while \ours\ boosts QA accuracy by up to 12.61\% (see the MultiModalQA column under Image QA in Table \ref{Table: RAG_e2e_res}).
% which is up to 13\%. In addition, Table \ref{Table: RAG_retrieval_res} demonstrates that \ours\ consistently surpasses most baseline methods on almost all datasets in retrieval accuracy. 
This thus indicates that performing multi-vector retrieval with the sub-queries derived by \ours, can enhance the retrieval performance, and consequently the QA accuracy. Note that \ours\ consistently beats both ColBERT and ColBERT-orig, thus indicating the poor performance of ColBERT regardless of the underlying embedding model. 
% The only exception is WebQA (text) dataset, in which the retrieval accuracy of ColBERT is higher than \ours\ while \ours\ still attains better QA accuracy. This discrepancy arises because POQD optimizes sub-queries for QA accuracy rather than retrieval accuracy in RAG-based tasks.
% This can be attributed to the training mechanism of \ours\ which generates decomposed sub-queries to optimize the QA accuracy rather than the retrieval accuracy in the RAG-based QA tasks. 
% Hence, even with fewer relevant documents retrieved by \ours, these documents can indeed benefit the generator model performance more than the ones retrieved by using ColBERT. This finding aligns with the observation highlighted in \cite{cuconasu2024power}.
% , which points out that adding randomly retrieved documents can benefit the RAG performance. 
% and respectively. First of all, as Table \ref{Table: RAG_retrieval_res} shows, the retrieval accuracy of \ours\ is still higher than baseline methods, which is almost consistent with the trend in Table \ref{Table: IR_main_res}. Notably, different from \ref{Table: IR_main_res}, \ours\ significantly outperforms ColBERT. This thus indicates that it might not be inappropriate to decompose queries at the granularity of tokens. Since the ColBERT pre-trained model is leveraged for retrieval, this performance issue is thus possibly attributed to overfitting caused. 

\paragraph{Time analysis}

We further analyze both the training time and inference time of \ours\ for the RAG-based QA pipeline. First, regarding the training time, we record the overall running time of Algorithm \ref{alg:training} on all datasets except  StrategyQA. StrategyQA is excluded since, as noted earlier, the generator model is not fine-tuned for this multi-hop QA dataset.
% we ignore the results for StrategyQA since recall that we do not fine-tune the generator model for this Multi-hop QA dataset). 
The results, presented in Figure \ref{fig:training_time_full}, also decompose the total running time into two components: the time for invoking Algorithm \ref{alg:opt_prompt} to optimize the prompt $p$ in $\mathcal{L}(\Theta;p)$  and that for training the parameter $\Theta$. As illustrated by this figure, the dominant training overhead is from the generator training phase, while optimizing the prompt adds negligible training cost. Considering that \ours\ also yields significant performance gains as Table \ref{Table: RAG_e2e_res} shows, these findings thus highlight both the effectiveness and efficiency of \ours. 

We also report the inference time of \ours\ in Figure \ref{fig:inference_time}. Similar to the breakdown in Figure \ref{fig:training_time_full}, the total inference time is decomposed into three components, including the generator model inference time, retrieval time, and the time spent on decomposing queries. As illustrated in Figure \ref{fig:inference_time}, the model inference time contributes the largest portion of the overall inference overhead, significantly exceeding the query decomposition time. This finding indicates that incorporating query decomposition does not adversely impact the overall inference speed.
% Hence, this demonstrates that performing query decomposition would not hurt the overall inference speed. 
\begin{figure}[!t]
    \centering
    \includegraphics[width=\linewidth]{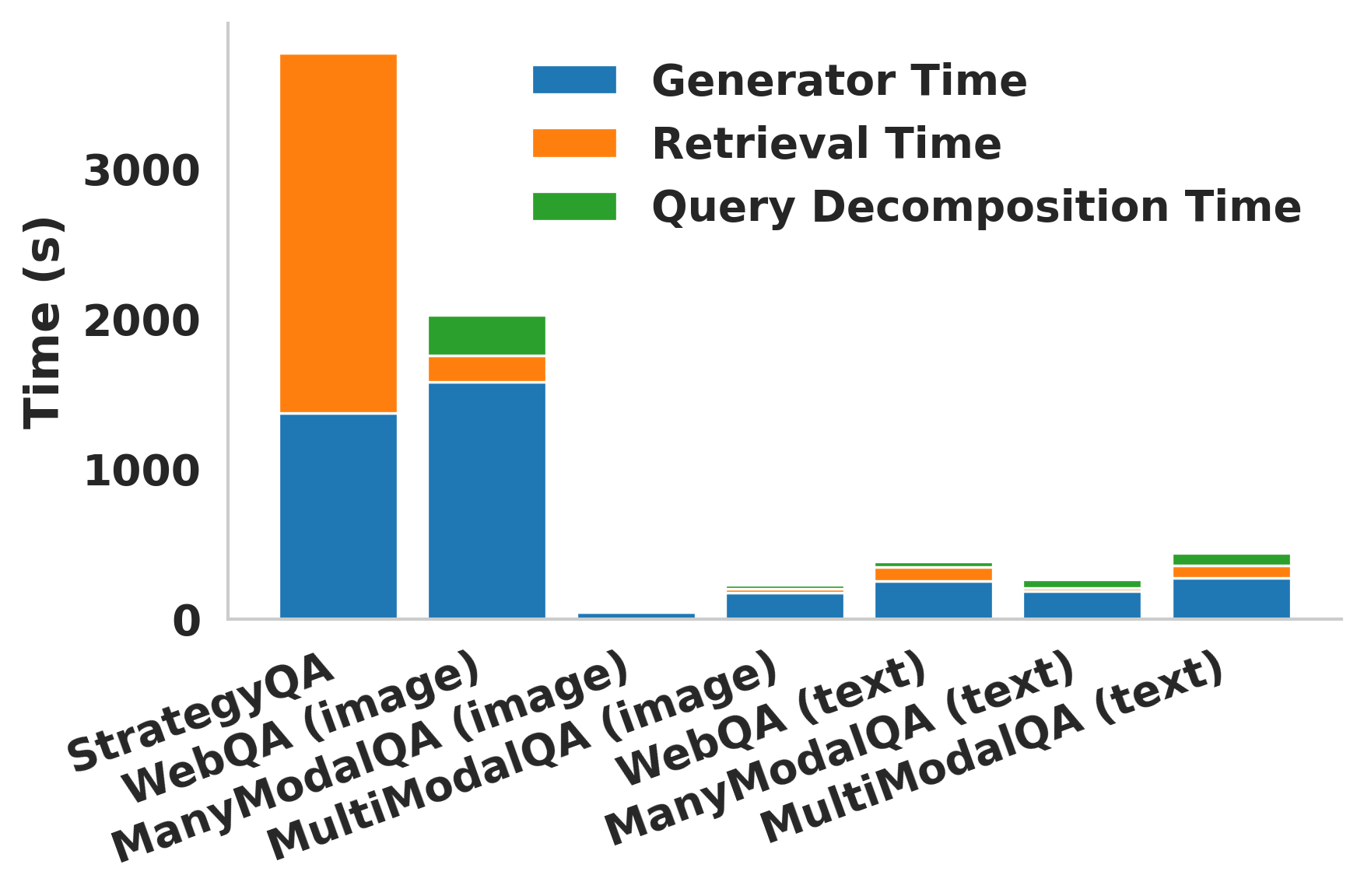}
    \vspace{-5mm}
    \caption{Overall inference time of \ours}
    \vspace{-5mm}
    \label{fig:inference_time}
\end{figure}

% \paragraph{Plots of training loss over time}
% In Figure \ref{fig:training_loss_plot}\yinjun{change the x axis label to ``Number of iterations''}, we further plot the training loss curve over the training process of Algorithm \ref{alg:training} on the WebQA dataset in text QA. 

% \begin{figure}[!h]
%     \centering
%     \includegraphics[width=\linewidth]{Figures/Training_loss.jpg}
%     \caption{Training loss curve for Algorithm \ref{alg:training}}
%     \label{fig:training_loss_plot}
% \end{figure}

\subsection{Ablation studies}\label{sec: ablation}
We also perform a series of ablation studies to evaluate the effect of the hyperparameters and the superiority of \ours\ under various configurations with the WebQA dataset in the text QA task.

\paragraph{Effect of $\alpha$}
We also vary the value of $\alpha$ in Algorithm \ref{alg:opt_prompt} to evaluate its effect on the training loss, which produces Figure \ref{fig:ablation_alpha}. In this figure, the prompt used for decomposing queries is updated three times by Algorithm \ref{alg:opt_prompt} (indicated by the inverted triangle symbols). 
As this Figure suggests, if $\alpha$ is too large (say $\alpha$=0.05), \ours\ would struggle to find a suitable $p^{\text{new}}$ in Algorithm \ref{alg:opt_prompt}, thus causing the underfitting issue. In contrast, if $\alpha$ is too small (say $\alpha$=0.01), \ours\ converges much slower than our default with $\alpha$=0.02. Hence, the default configuration of $\alpha$, i.e., 0.02, can balance the convergence speed and the final performance. In addition, with $\alpha$=0.02, the training loss decreases smoothly throughout the training process, exhibiting no abrupt spikes. 

\begin{figure}[!h]
    \centering
    \includegraphics[width=\linewidth]{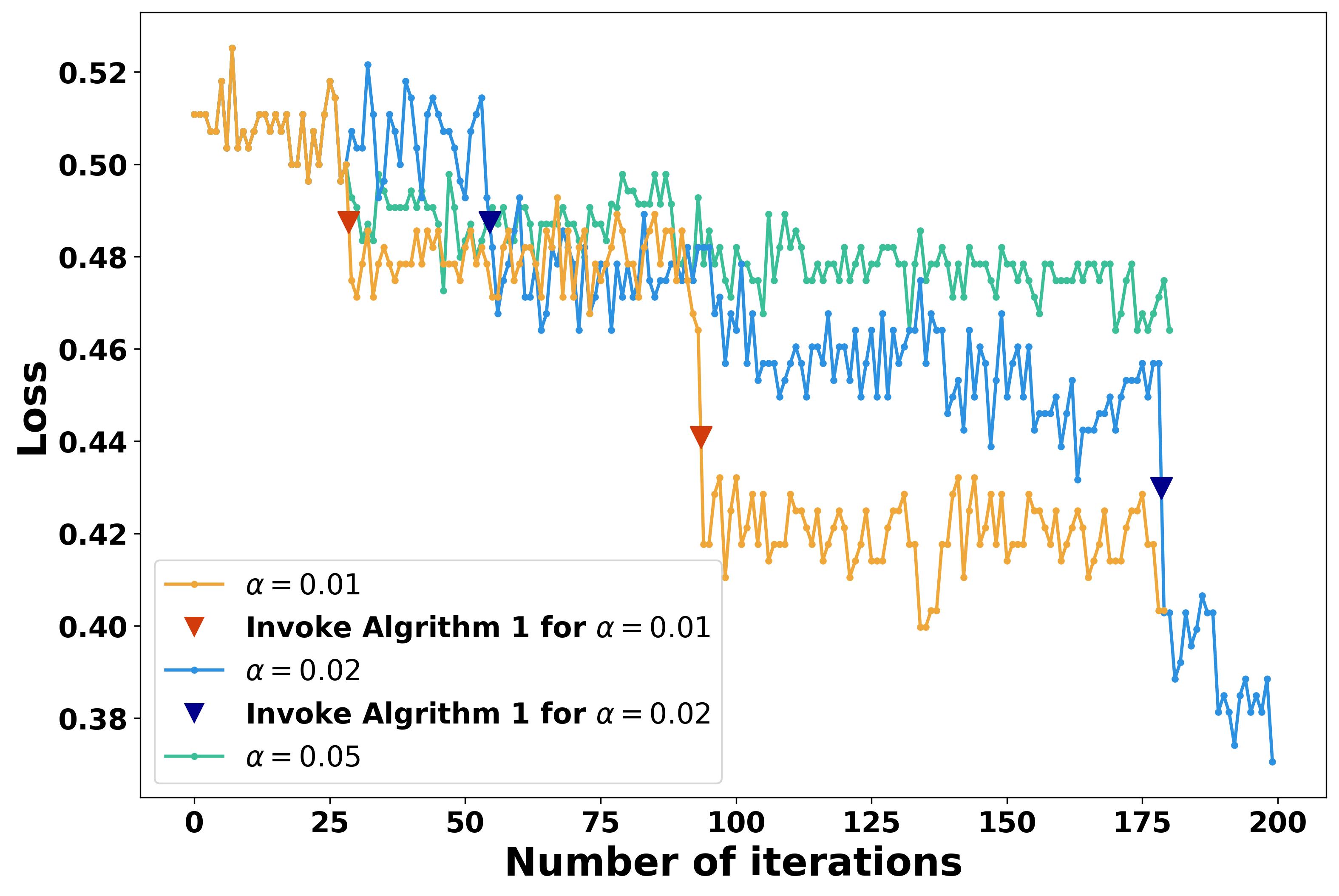}
    \caption{Effect of $\alpha$ on the training loss on the WebQA dataset in text QA}
    \label{fig:ablation_alpha}
\end{figure}

\paragraph{Effect of $\tau$} 
% Plus, recall that different configurations for the hyper-parameters, in particular, $\tau$, may lead to varied performance and training costs. 
% Therefore, 
We measure the training loss $\mathcal{L}(\Theta;p)$ and the total training time by varying $\tau$ between 0 to 5, which is plotted in Figure \ref{fig:train_time_vary}. Notably, the performance trend exhibited in this figure matches the analysis in Section \ref{sec: theory}, i.e., larger $\tau$ leading to longer training time but better performance. As this figure suggests, configuring $\tau$ as 3 is a reasonable choice since it balances the training efficiency and performance well. 

\begin{figure}[!h]
    \centering
    \includegraphics[width=\linewidth]{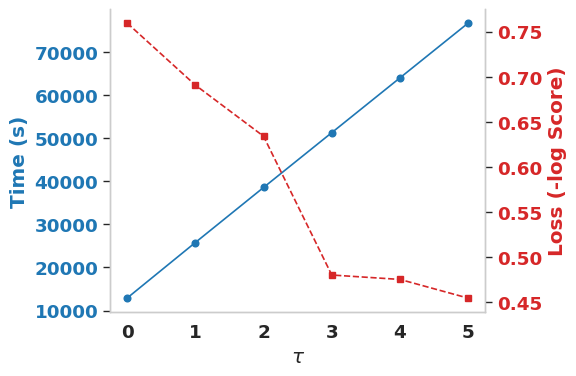}
    \vspace{-5mm}
    \caption{Training time of \ours\ with varied values of $\tau$}
    \vspace{-5mm}
    \label{fig:train_time_vary}
\end{figure}

\paragraph{Effect of using varied LLMs for decomposing queries}
Unlike other methods, ICL-QD, ICLF-QD, and \ours\ rely on one LLM for generating decomposed sub-queries. Hence, we also compare their performance with alternative LLMs for query decomposition. Specifically, this experiment is conducted by leveraging the GPT-4 model \cite{achiam2023gpt} and DeepSeek-V3 \cite{deepseek2024v3} as the query decomposer, which leads to the results in Table \ref{Table: ablation_llm_decompose}. As this table shows, with varied LLMs used for query decomposition, \ours\ consistently outperforms ICL-QD and ICLF-QD.

\begin{table}[!h]
\small
    \centering
\begin{tabular}{c|cc|cc}\toprule
% &\multicolumn{3}{c|}{\textit{Image QA}} & \multicolumn{4}{c}{\textit{Text QA}} \\
% Dataset & \multicolumn{1}{c}{WebQA} & \multicolumn{1}{c}{MultiModalQA} & 
% \multicolumn{1}{c|}{ManyModalQA} & \multicolumn{1}{c}{WebQA} & \multicolumn{1}{c}{MultiModalQA} & 
% \multicolumn{1}{c}{ManyModalQA} & \multicolumn{1}{c}{StrategyQA} \\ \midrule
% & Top-20 & Top-100 & Top-20 & Top-100 & Top-20 & Top-100 & Top-20 & Top-100 \\ \midrule
% &MS&AR&TO&FE&CF&SF&CV&NF&NQ&HQ&FQ&SD&DB&QU \\ \toprule
& \multicolumn{2}{c|}{Top-2 retrieval accuracy} & \multicolumn{2}{c}{QA accuracy} \\ 
& GPT-4 & DeepSeek & GPT-4 & DeepSeek \\\midrule
ICL-QD & 56.83 &55.40 & 57.91 & 58.27\\ 
ICLF-QD & 49.28 & 48.92 & 56.47 & 57.55 \\  \midrule
\ours & \textbf{58.99} & \textbf{55.40} &\textbf{59.71}& \textbf{60.43}\\ 
\bottomrule
\end{tabular}
\caption{Performance on the WebQA dataset in text QA with varied LLMs for query decomposition}\label{Table: ablation_llm_decompose}
\end{table}

\paragraph{Additional ablation studies}
Due to space limit, other ablation studies are reported in Appendix \ref{sec: addition_ablation}, including the study on the effect of the varied number of retrieved items, the filtering step, the varied generator models, and the varied embedding models for retrieval.

\subsection{Qualitative studies}\label{sec: qualitative}
We expand the example shown in Figure \ref{fig:motivating_exp} to show the differences between the baseline methods and \ours. Specifically, we report the decomposed sub-queries generated by other baseline methods in Table \ref{Table: qualitative}. In comparison to \ours, these baseline methods produce sub-queries that either contain irrelevant information, say the word ``what'' and ``type'' generated by S-QD, or miss key information, say ``Hong Kong'' by U-QD. As a consequence, the most relevant images retrieved with those sub-queries (shown in Appendix \ref{sec: addition_quali})  do not match the ground-truth image shown in Figure \ref{fig:motivating_exp}. These retrieval errors can thus be attributed to the unreasonable sub-queries produced by the baseline methods. 
\begin{table}[!h]
\small
    \centering
\begin{tabular}{c|c}\toprule
& Decomposed sub-queries \\ \midrule
S-QD & [``What Victoria, Hong Kong'', ``type buildings''] \\
U-QD & [``Victoria'' ,``bulidings'']\\
ICL-QD  &[``Victoria Hong Kong'', ``buildings'', ``type''] \\ 
ICLF-QD &[``buildings Victoria, Hong Kong''] \\ \midrule
\ours & [``Victoria Hong Kong'',``buildings'']\\ 
\bottomrule
\end{tabular}
\caption{Performance on the WebQA dataset in text QA by using the Roberta model as the embedding model for retrieval}\label{Table: qualitative}
\end{table}

% Recall that according to the theoretical analysis provided in Section \ref{sec: theory}, larger $\tau$ may lead to longer training time for 
% \yinjun{how about the optimization time by LLMs?} including the query decomposition, retrieval and generation phases.

% \subsection{Qualitative results}\label{sec: qualitative}
% provide examples from StrategyQA.

% \subsection{Ablation studies}
% effect of retrieval methods on generator. 

% inference time analysis

% \subsubsection{RAG-based Text QA}

% strategyQA dataset
% How to adapt it to multi-hop QA?

% candidate SOTA method to try: \cite{xu2024search}
% \subsection{Experiments on Generative retrieval}

\section{Related work}\label{sec: related_work}
%discuss the general state-of-the-art of RAG and mainly discuss how the retrieval process is performed in RAG
\paragraph{Multi-Vector Retrieval}
Multi-Vector Retrieval (MVR), first introduced by ColBERT \cite{khattab2020colbert}, employs a late-interaction mechanism to evaluate query-document similarity. This approach can 
% decomposes queries and documents into fine-grained components, computes similarity scores for each pair, and aggregates these scores, 
overcome the representational limitations of dense retrieval methods that use single embeddings for queries and documents. Subsequent works have focused on accelerating retrieval \cite{santhanam2022colbertv2, santhanam2022plaid, gao2021coil, li2023citadel} or improving score aggregation strategies \cite{qianmulti}. Notably, most solutions decompose queries into individual tokens, leaving the optimization of query decomposition for MVR underexplored.

% \paragraph{Query token selection for Multi-Vector Retrieval} To facilitate Multi-Vector Retrieval (MVR), ColBERT \cite{khattab2020colbert} utilizes all tokens for MaxSim operations. However, not all query tokens are equally beneficial to MVR, thus motivating recent studies \cite{formal2021splade, gao2021coil, li2023citadel} on removing unnecessary query tokens for computational efficiency. For instance, \cite{li2023citadel} proposed a query token routing mechanism to route tokens from queries and documents into different lexical keys. For each document token, only those query tokens routed to the same key with it are selected for {\em MaxSim} computation. This line of research could be regarded as one special case of our problem, in which we constrain the search space to tokens. 

\paragraph{LLM-based optimizer} \cite{yang2024large, pryzant2023automatic,wangpromptagent} have shown the potential of large language models (LLMs) as a generic optimizer, which aims to search the prompts for a given LLM based on a history of a history of past instructions and performance scores on the training set.
Later on, this strategy is further extended for optimizing the configurations of LLM agents \cite{zhangoffline, zhao2024empowering}.  
This strategy is highly effective since it is free of gradient computation \cite{lin2024llm}. 

Due to the space limit, we discuss other relevant related works in Appendix \ref{sec: addition_related_work}.
% \cite{xionganswering} claims the decomposition may not help dense retrieval. 

% \cite{guan2024amor} decompose queries into sub-questions to make sure that the retrieved documents overlap with collected evidence. 

%Starting from dense retrieval and colbert, discuss the general state-of-the-art of multi-vector retrieval. Point out that nothing has been done on learning to perform multi-vector retrieval

%Another paradigm of performing retrieval, which guides LLMs to output document IDs directly and achieves fine-tuning via fine-tuning LLMs directly. 
% \paragraph{Generative Retrieval}
% See \cite{li2024matching} for a comprehensive survey. 
% Generative retrieval introduced by \cite{tay2022transformer}
% multi-vector retrieval in generative retrieval \cite{wu2024generative}. 
\section{Conclusion}
In this paper, we studied how to decompose queries into sub-queries for multi-vector retrieval such that the performance of retrieval-based systems, in particular, RAG systems is optimized. 
% Solving this problem is not end-to-end differentiable and may be highly inefficient. To address these issues, 
To solve this problem, we propose to prompt an LLM to decompose queries into sub-queries and optimize its prompt with an LLM-based optimizer. We further propose an efficient end-to-end training algorithm. 
% to alternatively optimize this prompt and train the downstream retrieval-based system, 
% which can achieve superior performance in comparison to the state-of-the-art but at a reasonable training cost. 
Extensive experiments on a variety of RAG-based QA benchmark datasets demonstrate the effectiveness and efficiency of our method. 
\newpage
\section*{Acknowledgements}
This work is supported by “The Fundamental Research Funds for the Central Universities, Peking University”. We are grateful to the GPU Cluster support with AIBD platform from Fundamental Industry Training Center of Tsinghua University.

\section*{Impact Statement}
This paper presents work whose goal is to enhance the end-to-end performance of the retrieval-based systems, in particular, the Retrieval-Augmented Generation (RAG) systems, through optimizing their retrieval process. This is achieved by utilizing multiple-vector retrieval and optimizing the way of decomposing queries into sub-queries. We believe that the proposed method could be seamlessly employed to enhance the performance of arbitrary RAG systems in a lightweight manner, but without imposing any negative impacts.
\bibliography{reference}
\bibliographystyle{icml2025}
\newpage
\onecolumn

\addcontentsline{toc}{section}{Appendices}
\setcounter{section}{0}
\renewcommand{\thesection}{\Alph{section}}

\section{Theoretical analysis}\label{appendix: theorem}
% Suppose that the retrieval model or RAG model is parameterized by $\Theta$ and the objective function of this model is defined as $\mathcal{L}(\Theta;p)$, in which we use symbol $p$ to signify its dependency on the prompt $p$. We further assume both strong convexity and Lipschitz smoothness hold on $\mathcal{L}(\Theta;p)$ for arbitrary $\Theta_1$ and $\Theta_2$ with any given $p$, i.e.:
% \begin{align*}
%     &\mathcal{L}(\Theta_2;p) \geq \mathcal{L}(\Theta_1;p) + \nabla \mathcal{L}(\Theta_1;p)(\Theta_2-\Theta_1) + \frac{\mu}{2}\|\Theta_2-\Theta_1\|_2^2 \\
%     &\mathcal{L}(\Theta_2;p) \leq \mathcal{L}(\Theta_1;p) + \nabla \mathcal{L}(\Theta_1;p)(\Theta_2-\Theta_1) + \frac{L}{2}\|\Theta_2-\Theta_1\|_2^2 \\
%     % &\mathcal{L}(y) \leq \mathcal{L}(x) + \nabla \mathcal{L}(x)(y-x) + \frac{L}{2}\|y-x\|_2^2,
% \end{align*}
% in which $\mu$ and $L$ are two positive constants. 

The proof of Theorem \ref{thm: converge1} depends on the following essential Lemma from prior studies.

\begin{lemma}[Linear convergence]
For a model parameterized by $\Theta$ satisfying the $\mu$-PL* condition and the $L$-smoothness, suppose it is trained with the gradient descent method and converges to $\Theta^*$, then the following inequality defined on the loss function $\mathcal{L}(\Theta)$ holds for the model parameter at the $k_{th}$ iteration (denoted by $\Theta_k$):
\begin{align*}
    \mathcal{L}(\Theta_k)-\mathcal{L}(\Theta^*) \leq (1-\frac{\mu}{2L})^k(\mathcal{L}(\Theta_0)-\mathcal{L}(\Theta^*)).
\end{align*}
The above inequality indicates that the model $\Theta$ converges at a linear rate. 
\end{lemma}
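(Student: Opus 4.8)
The plan is to introduce the parameter held fixed when Algorithm~\ref{alg:opt_prompt} is invoked—call it $\widehat{\Theta}$—and use it as a pivot that appears on both sides of the target inequality and cancels. The structural fact I would rely on is that, at the start of any non-initial outer iteration of Algorithm~\ref{alg:training}, this $\widehat{\Theta}$ is exactly the output of the $\tau$ gradient-descent steps that Line~\ref{line: train_theta} ran under the \emph{current} $p^{\text{old}}$ in the preceding iteration. I would then assemble three inequalities, each pinning one term of the statement to $\widehat{\Theta}$ and each drawing on a different assumption.

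First, because the prompt is genuinely updated in Line~\ref{line:update_prompt}, the stopping rule of Algorithm~\ref{alg:opt_prompt} (Line~\ref{line: convergence}) guarantees that, at the fixed $\widehat{\Theta}$, the new prompt lowers the loss by at least $\alpha$:
\[
\mathcal{L}(\widehat{\Theta}; p^{\text{new}}) \le \mathcal{L}(\widehat{\Theta}; p^{\text{old}}) - \alpha .
\]
Second, since the PL* condition forces the gradient-descent limit $\Theta^{*}(p^{\text{new}})$ to be a global minimizer of $\mathcal{L}(\,\cdot\,; p^{\text{new}})$, evaluating at $\widehat{\Theta}$ gives $\mathcal{L}(\Theta^{*}(p^{\text{new}}); p^{\text{new}}) \le \mathcal{L}(\widehat{\Theta}; p^{\text{new}})$; chaining this with the first inequality yields $\mathcal{L}(\Theta^{*}(p^{\text{new}}); p^{\text{new}}) \le \mathcal{L}(\widehat{\Theta}; p^{\text{old}}) - \alpha$.

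Third—and this is where the slack term appears—I would apply the linear-convergence Lemma to precisely the $\tau$ gradient steps that produced $\widehat{\Theta}$ under the fixed prompt $p^{\text{old}}$. Writing $\Theta_0$ for the parameter at the start of that $\tau$-step phase, the Lemma gives $\mathcal{L}(\widehat{\Theta}; p^{\text{old}}) - \mathcal{L}(\Theta^{*}(p^{\text{old}}); p^{\text{old}}) \le (1-\tfrac{\mu}{2L})^{\tau}\big(\mathcal{L}(\Theta_0; p^{\text{old}}) - \mathcal{L}(\Theta^{*}(p^{\text{old}}); p^{\text{old}})\big)$, and the bounded-loss assumption together with nonnegativity of the loss bounds the parenthesized gap by $M$. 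Rearranged, this reads $\mathcal{L}(\Theta^{*}(p^{\text{old}}); p^{\text{old}}) \ge \mathcal{L}(\widehat{\Theta}; p^{\text{old}}) - (1-\tfrac{\mu}{2L})^{\tau}M$. Subtracting the upper bound on the new converged loss from this lower bound on the old one cancels the pivot $\mathcal{L}(\widehat{\Theta}; p^{\text{old}})$ and leaves exactly $\alpha - (1-\tfrac{\mu}{2L})^{\tau}M$, which is the claim.

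I expect the main obstacle to be conceptual rather than computational: correctly identifying the pivot $\widehat{\Theta}$ and recognizing that it is only $\tau$-step trained under $p^{\text{old}}$ rather than fully converged—this is exactly what forces the $(1-\tfrac{\mu}{2L})^{\tau}M$ gap instead of a clean $\alpha$-decrease. A secondary point to handle is the very first outer iteration, where $\widehat{\Theta}$ is the random initialization and has not been trained under $p^{\text{old}}$; there the third inequality only yields the weaker slack $M$, so I would either treat this as a base case or assume a short warm-up of $\Theta$ under the initial prompt so that the stated bound holds uniformly.
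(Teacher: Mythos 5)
There is a fundamental mismatch here: the statement you were asked to prove is the linear-convergence lemma itself, but your proposal proves Theorem \ref{thm: converge1} instead, and does so \emph{by invoking} the lemma as a black box (``I would apply the linear-convergence Lemma to precisely the $\tau$ gradient steps that produced $\widehat{\Theta}$''). As an argument for the assigned statement this is circular---nothing in your write-up engages the two hypotheses ($\mu$-PL* condition and $L$-smoothness from Assumption \ref{assm: convex_smooth}) or derives the per-iteration contraction factor $1-\frac{\mu}{2L}$; you simply assume the geometric decay and spend your effort on the downstream bookkeeping with the pivot $\widehat{\Theta}$. That bookkeeping, incidentally, is essentially the paper's own proof of Theorem \ref{thm: converge1}: your three inequalities correspond to its $\alpha$-decrease step, its use of $\mathcal{L}(\Theta^{\text{new}}_{0};p^{\text{new}})-G_{\text{new}}\geq 0$, and its bound $\mathcal{L}(\Theta^{\text{old}}_0;p^{\text{old}})\leq M$, and your caveat about the very first outer iteration is a fair observation that the paper glosses over---but it answers a different question from the one posed.

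What the lemma actually requires is the standard one-step descent argument, which is exactly what the paper's proof does. From $L$-smoothness and the gradient-descent update $\Theta_{t+1}=\Theta_t-\eta\nabla\mathcal{L}(\Theta_t)$ one obtains $\mathcal{L}(\Theta_{t+1})\leq\mathcal{L}(\Theta_t)-\bigl(\eta-\frac{L\eta^2}{2}\bigr)\|\nabla\mathcal{L}(\Theta_t)\|_2^2$; the PL* condition then lower-bounds the gradient norm via $\|\nabla\mathcal{L}(\Theta_t)\|_2^2\geq\mu\,\mathcal{L}(\Theta_t)\geq\mu\bigl(\mathcal{L}(\Theta_t)-\mathcal{L}(\Theta^*)\bigr)$, where the second inequality uses nonnegativity of $\mathcal{L}(\Theta^*)$---a point worth flagging, since PL* controls the loss itself rather than the optimality gap. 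Combining and subtracting $\mathcal{L}(\Theta^*)$ from both sides gives the contraction $\mathcal{L}(\Theta_{t+1})-\mathcal{L}(\Theta^*)\leq\bigl[1-\mu\eta\bigl(1-\frac{L\eta}{2}\bigr)\bigr]\bigl(\mathcal{L}(\Theta_t)-\mathcal{L}(\Theta^*)\bigr)$, which with the step size $\eta=\frac{1}{L}$ becomes the factor $1-\frac{\mu}{2L}$, and iterating over $k$ steps yields the claimed bound. None of these steps appear in your proposal, so as it stands the assigned statement is left unproven.
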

% As mentioned in Section Suppose gradient descent is used to optimize $\mathcal{L}(\Theta; p)$ with fixed $p$ and thus $\Theta$ is optimized with the following formula:
% Then according to prior studies, e.g., \cite{}, the following formula holds, i.e.,:
% which could be derived as follows.
\begin{proof}
First, according to the $\mu$-PL* condition shown in Assumption \ref{assm: convex_smooth}, i.e., $\|\mathcal{L}(\Theta_t)\|_2^2 \geq \mu \mathcal{L}(\Theta_{t})$, the following inequality holds:
\begin{align}\label{eq: pl_condition}
    \|\mathcal{L}(\Theta_t)\|_2^2 \geq \mu \mathcal{L}(\Theta_{t}) \geq \mu (\mathcal{L}(\Theta_{t}) - \mathcal{L}(\Theta^*)).
\end{align}

Then according to the smoothness property of $\mathcal{L}(\Theta_t)$, the following formula holds:
\begin{align*}
    \mathcal{L}(\Theta_{t+1}) \leq \mathcal{L}(\Theta_t) + \nabla \mathcal{L}(\Theta_t;p)(\Theta_{t+1}-\Theta_t) + \frac{L}{2}\|\Theta_{t+1}-\Theta_t\|_2^2,
\end{align*}
According to Formula \eqref{eq: gd}, $\Theta_{t+1}-\Theta_t$ in the above formula can be substituted by $-\eta \mathcal{L}(\Theta_t)$, leading to:
\begin{align*}
    \mathcal{L}(\Theta_{t+1}) & \leq \mathcal{L}(\Theta_t) -\eta \|\nabla \mathcal{L}(\Theta_t)\|_2^2 + \frac{L\eta^2}{2}\|\nabla \mathcal{L}(\Theta_t)\|_2^2\\
    & = \mathcal{L}(\Theta_t) - (\eta - \frac{L\eta^2}{2})\|\nabla \mathcal{L}(\Theta_t)\|_2^2.
\end{align*}

According to Equation \eqref{eq: pl_condition}, the above formula can be transformed to:
\begin{align*}
    \mathcal{L}(\Theta_{t+1}) & \leq \mathcal{L}(\Theta_t) -\eta \|\nabla \mathcal{L}(\Theta_t)\|_2^2 + \frac{L\eta^2}{2}\|\nabla \mathcal{L}(\Theta_t)\|_2^2\\
    & = \mathcal{L}(\Theta_t) - (\eta - \frac{L\eta^2}{2})\|\nabla \mathcal{L}(\Theta_t)\|_2^2\\
    & \leq \mathcal{L}(\Theta_t) - (\eta - \frac{L\eta^2}{2})\cdot \mu (\mathcal{L}(\Theta_{t}) - \mathcal{L}(\Theta^*))
\end{align*}

Then by subtracting $\mathcal{L}(\Theta^*)$ on both sides of the above formula, we can obtain:
\begin{align*}
    \mathcal{L}(\Theta_{t+1}) -  \mathcal{L}(\Theta^*) & \leq \mathcal{L}(\Theta_t) -  \mathcal{L}(\Theta^*) - (\eta - \frac{L\eta^2}{2})\cdot \mu (\mathcal{L}(\Theta_{t})-  \mathcal{L}(\Theta^*)) \\ 
    & = [1- \mu\eta (1 - \frac{L\eta}{2})](\mathcal{L}(\Theta_{t})-  \mathcal{L}(\Theta^*))
\end{align*}

Then by setting $\eta = \frac{1}{L}$, $1- \mu\eta (1 - \frac{L\eta}{2}) = 1-\frac{\mu}{2L}$, and thus the above formula is bounded by:
\begin{align*}
    \leq (1-\frac{\mu}{2L})(\mathcal{L}(\Theta_{t})-  \mathcal{L}(\Theta^*))
\end{align*}
\end{proof}

Recall that according to Assumption \ref{assm: convex_smooth}, the training loss satisfies the $\mu$-PL* condition and $L$-smoothness with arbitrary fixed $p$. Hence, by substituting $\mathcal{L}(\Theta)$ with $\mathcal{L}(\Theta; p)$, the following inequality holds for $\mathcal{L}(\Theta; p)$ when $\Theta$ is updated with gradient descent:
\begin{align}\label{eq: pl_inequality}
    \mathcal{L}(\Theta_k;p)-\mathcal{L}(\Theta^*;p) \leq (1-\frac{\mu}{2L})^k(\mathcal{L}(\Theta_0;p)-\mathcal{L}(\Theta^*;p)).
\end{align}

% In the above analysis, we assume that $p$ is a constant. However, 

% as described in Section \ref{sec: method}, $p$ is also iteratively updated and thus $p$ also depends on the time step $t$. As a result, we use $\mathcal{L}(\Theta_t;p_t)$ to denote the loss at the $t_{th}$ iteration. According to Algorithm \ref{alg:method}, $p$ is updated every $\tau$ iterations, and thus $p_{k\tau}=p_{k\tau+r}$ for any integer $k$ and $r<\tau$. Therefore, for any integer $k$ and $r<\tau$, the objective function $\mathcal{L}(\Theta_{k\tau + r};p_{k\tau + r}) = \mathcal{L}(\Theta_{k\tau + r};p_{k\tau})$. Also, the optimal parameter $\Theta^*$ would also depend on $p_{k\tau}$, in particular $k\tau$, we thus denote it as $\Theta^*(p_{k\tau})$ and corresponding optimized loss as $G_{k\tau}$ . 

Then we formally provide the proof to Theorem \ref{thm: converge1} as follows. 

% Furthermore, at each time when $p^{\text{old}}$ is updated to $p^{\text{new}}$, according to Algorithm \ref{alg:training}, the loss $\mathcal{L}(\Theta; p)$ gets reduced by at least $\alpha$, meaning that $\mathcal{L}(\Theta_{k\tau}; p^{\text{old}}) - \mathcal{L}(\Theta_{k\tau}; p^{\text{new}}) \geq \alpha$.  

% In addition, we further assume that after training $\tau$ iterations with prompt $p_{k\tau}$, the loss is upper bounded by a constant $M$, i.e.,:
% \begin{align}
%     \mathcal{L}(\Theta_{k\tau};p_{k\tau}) \leq M
% \end{align}

\begin{proof}
With a fixed prompt $p^{\text{old}}$ (or $p^{\text{new}}$ resp.), suppose the model parameter at the $t_{th}$ iteration of the training process shown in Line \ref{line:update_prompt} of Algorithm \ref{alg:training} is $\Theta^{\text{old}}_t$ (or $\Theta^{\text{new}}_t$ resp.). By representing $A=1-\frac{\mu}{2L}$, 
% $\Theta^{\text{old}}_0$, 
% First, for any integer $k$ and $r<\tau$, 
Equation \eqref{eq: gd} can be rewritten as:
\begin{align*}
    & \mathcal{L}(\Theta^{\text{old}}_t;p^{\text{old}})-\mathcal{L}(\Theta^*(p^{\text{old}});p^{\text{old}}) =  \mathcal{L}(\Theta^{\text{old}}_t;p^{\text{old}})-G_{\text{old}}\\
    & % =\mathcal{L}(\Theta_{k\tau + r};p_{k\tau})-G_{k\tau}\\
    \leq A^t[\mathcal{L}(\Theta^{\text{old}}_0;p^{\text{old}})-\mathcal{L}(\Theta^*(p^{\text{old}});p^{\text{old}})] = A^t[\mathcal{L}(\Theta^{\text{old}}_0;p^{\text{old}})-G_{\text{old}}],
\end{align*}

in which we use $G_{\text{old}}$ to represent $\mathcal{L}(\Theta^*(p^{\text{old}});p^{\text{old}})$ and substitute the ratio $1-\frac{\mu}{2L}$ with a constant $A(0< A < 1)$

Then when $t=\tau$, the above formula is transformed into:
\begin{align}\label{eq: gd1}
    \mathcal{L}(\Theta^{\text{old}}_{\tau};p^{\text{old}})-G_{\text{old}} \leq A^{\tau}[\mathcal{L}(\Theta^{\text{old}}_0;p^{\text{old}})-G_{\text{old}}].
\end{align}

Recall that when $p^{\text{old}}$ is updated to $p^{\text{new}}$, the training loss is reduced by $\alpha$.
% $\mathcal{L}(\Theta_{k\tau + \tau};p_{k\tau})$ by simply replacing $p_{k\tau}$ with $p_{(k+1)\tau}$. 
Therefore, the following formula holds:
\begin{align*}
    \mathcal{L}(\Theta^{\text{old}}_{\tau};p^{\text{old}}) - \mathcal{L}(\Theta^{\text{old}}_{\tau};p^{\text{new}}) \geq \alpha,
\end{align*}
which can be integrated into Formula \eqref{eq: gd1}, resulting in:
\begin{align}\label{eq: gd3}
    \mathcal{L}(\Theta^{\text{old}}_{\tau};p^{\text{new}}) +\alpha-G_{\text{old}} = \mathcal{L}(\Theta^{\text{new}}_{0};p^{\text{new}}) +\alpha-G_{\text{old}} \leq A^{\tau}(\mathcal{L}(\Theta^{\text{old}}_0;p^{\text{old}})-G_{\text{old}}).
\end{align}

By subtracting $G_{\text{old}}$ and adding $G_{\text{new}}$ on the left side of the above formula, we can obtain:
\begin{align*}\label{eq: gd2}
    \mathcal{L}(\Theta^{\text{new}}_{0};p^{\text{new}})-G_{\text{new}} + G_{\text{new}} -G_{\text{old}} +\alpha \leq A^{\tau}(\mathcal{L}(\Theta^{\text{old}}_0;p^{\text{old}})-G_{\text{old}}).
\end{align*}

Since $G_{\text{new}}$ minimizes $\mathcal{L}(\Theta;p_{\text{new}})$ for any $\Theta$, this means that $\mathcal{L}(\Theta^{\text{new}}_{0};p^{\text{new}})-G_{\text{new}} \geq 0$. This thus can be leveraged to lower bound the left side of the above formula, leading to:
\begin{align*}
    G_{\text{new}} -G_{\text{old}} +\alpha \leq A^{\tau}(\mathcal{L}(\Theta^{\text{old}}_0;p^{\text{old}})-G_{\text{old}}).
\end{align*}

By leveraging the fact that $\mathcal{L}(\Theta^{\text{old}}_0;p^{\text{old}}) \leq M$ and $G_{\text{old}} > 0$, then the above formula is further bounded by:
\begin{align*}
    G_{\text{old}} -G_{\text{new}}  \geq \alpha - A^{\tau}M.
\end{align*}
\end{proof}

\section{Additional experimental setups}\label{sec: addition_exp_setup}
\paragraph{Datasets and models} As introduced in Section \ref{sec: intro}, we primarily conduct experiments on RAG-based QA tasks covering both image QA and text QA tasks, in which we evaluate both the retrieval performance and end-to-end QA accuracy. 
% For IR tasks, we follow \cite{lee2024rethinking,ni2022large} to train a multi-vector retrieval model on MS MARCO dataset \cite{nguyen2016ms} and then perform zero-shot evaluation on 13 datasets from BEIR \cite{thakurbeir}. On the other hand, 
% regarding the RAG-based QA tasks, we evaluate \ours\ on both single-hop QA tasks and multi-hop QA tasks. 
Specifically, we employ {\bf Web Questions} ({\bf WebQA}) \cite{berant2013semantic, WebQA21}, {\bf MultiModalQA} \cite{talmormultimodalqa}, {\bf ManyModalQA} \cite{hannan2020manymodalqa} and {\bf StrategyQA} \cite{geva2021did} dataset for experiments. 
The WebQA dataset consists of Image QA dataset~\cite{WebQA21} and Text QA dataset~\cite{berant2013semantic}.
Among these datasets, WebQA, MultiModalQA and ManyModalQA contain questions that may require retrieving data from different modalities. Hence, for image QA, in particular, for WebQA, MultiModalQA and ManyModalQA dataset, we only select questions requiring retrieving images while for text QA, we only select questions requiring text documents as input. Among these datasets, StrategyQA is employed for multi-hop QA while others are used for single-hop QA. 
% \yinjun{filter out questions from irrelevant modalities} are for Image QA while StrategyQA is for text QA. Regarding WebQA, its questions involve both text documents and images \yinjun{how to describe this?}. 
% {\bf Natural Questions} ({\bf NQ}) \cite{kwiatkowski2019natural},  SQuAD and TriviaQA dataset for Single-hop QA while  for multi-hop QA. 

Regarding the retrieval model for text QA, we employ the pre-trained model developed by \cite{santhanam2022colbertv2} for ColBERT, which fine-tunes the Bert model \cite{devlin2018bert} on the MSMARCO dataset \cite{nguyen2016ms}. Since ColBERT decomposes queries at the token level, which may not be suitable for decomposed sub-queries, we thus employ the Sentence-Bert model \cite{reimers2019sentence} for encoding sub-queries and corpus for other baseline methods as well as \ours. 
% The Sentence-Bert model is also built on top of Bert and can thus ensure fair comparison against ColBERT. 
On the other hand, for image QA, the CLIP model \cite{radford2021learning} is employed to embed text queries and images.
% encoding queries and documents across all methods. 

For the generator models, we leverage Llama3.1-8B \cite{dubey2024llama} and Llava-v1.5-7B \cite{liu2024visual} as generators for single-hop text QA and image QA respectively.  In the experiments for these tasks, only the generator models are fine-tuned in our experiments for baseline methods while \ours\ jointly fine-tunes the generator model and optimizes the prompt for query decomposition. On the other hand, regarding multi-hop text QA, i.e., StrategyQA dataset, we follow the state-of-the-art \cite{xu2024search} to utilize frozen GPT-4 \cite{achiam2023gpt} model and merely replace its default retrieval model, i.e., ColBERT, with other baseline and \ours.

Also, to facilitate multi-vector retrieval, it is essential to construct multiple vectors appropriately for each document or image in the corpus. Since queries are decomposed into sub-queries composed of multiple tokens, it is thus not appropriate to follow \cite{khattab2020colbert} to embed individual tokens in documents, which is even impossible for images in the context of image QA. Hence, for text QA, except ColBERT, we embed individual sentences for other baseline methods and \ours\ while for image QA, we segment each image into superpixels with existing image segmentation tools, say, SLIC \cite{achanta2010slic}, and embed these superpixels with CLIP model.

% \section{Additional experimental results}

% \begin{table*}[!h]
% \small
%     \centering
% \begin{tabular}{c|cccccc|cccccc}\toprule
% &\multicolumn{6}{c|}{\textit{Image QA}} & \multicolumn{4}{c}{\textit{Text QA}} \\
% % Dataset & \multicolumn{2}{c}{WebQA} & \multicolumn{2}{c}{NQ} & \multicolumn{2}{c|}{SQuAD} & \multicolumn{2}{c}{StrategyQA} \\ 
% Dataset & \multicolumn{2}{c}{WebQA (image)} & \multicolumn{2}{c}{MultiModalQA} & 
% \multicolumn{2}{c|}{ManyModalQA} & \multicolumn{2}{c}{WebQA (text)} & \multicolumn{2}{c}{StrategyQA} \\ \midrule
% & Top-20 & Top-100 & Top-20 & Top-100 & Top-20 & Top-100 & Top-20 & Top-100 & Top-20 & Top-100 \\ \midrule
% % &MS&AR&TO&FE&CF&SF&CV&NF&NQ&HQ&FQ&SD&DB&QU \\ \toprule
% Dense retrieval & & & & & & & 80.58 & 93.53&\\
% ColBERT & & & & & & & 86.69 & 95.32&\\
% S-QD & & & & & & & 72.30 & 87.05&\\
% U-QD & & & & & & & 68.71 & 85.25&\\
% ICL-QD & & & & & & & 80.57 & 93.52&\\ \midrule
% \ours & & & & & & & 80.58 & 93.53&\\
% \bottomrule
% \end{tabular}
% \caption{Retrieval Accuracy on QA datasets by searching decomposed sub-queries for optimizing the MVR performance score}\label{Table: RAG_other_retrieval_res}
% \end{table*}

\section{Additional related work}\label{sec: addition_related_work}
\paragraph{Question decomposition in Question Answering}
% Although how to find the best way of decomposing queries for multi-vector retrieval has not been well studied yet, 
How to decompose queries or questions has been extensively studied in the Question-Answering (QA) literature but from a different perspective. Specifically, various strategies have been devised to decompose complex questions into multiple sub-questions to facilitate multi-hop QA. Indeed, these strategies could be adapted for decomposing queries for multi-vector retrieval. For instance, a series of works \cite{xue2024enhancing, yang2021exploring,zhou2022learning, zhu2023chain,guo2022complex,wu2024gendec,geva2021strategyqa,khot2021text} train a sequence-to-sequence model in a supervised manner over a set of training queries with human-annotated sub-queries such that this model can generate decomposed sub-queries directly. Considering the lack of human annotations for sub-queries in most data, one can either train a query decomposition model in an unsupervised way \cite{perez2020unsupervised} or employ heuristics \cite{jiang2022understanding, gandhi2022measuring, yang2023deco} (say syntax rules \cite{yang2023deco}), or perform in-context learning with LLMs \cite{li2024framework, pereira2023visconde, niu2023bridging, ye2023large,xuedecompose,wu2024divide, chen2024complex,bhattacharya2023context, liao2024align, khotdecomposed,press2023measuring,zhang2024hierarchical} for query decomposition. To further enhance the decomposition performance, one can optionally collect feedback on the decomposed sub-queries and incorporate it into the above in-context learning-based methods. Typical feedback includes confidence scores \cite{qi2023art}, quality scores provided by a powerful enough LLM \cite{gao2024dr3}, or relevance scores between the generated answers and an expected topic \cite{sidhoum2024scoring}. Some other works even model the target sub-queries as latent variables \cite{huang2021latent, zhu2023chain}. However, to our knowledge, none of these solutions determine sub-queries for optimizing the downstream retrieval-based task performance.

\section{Additional experimental results}
\subsection{Prompts used for ICL-QD, ICLF-QD and \ours}\label{sec: prompts_for_methods}

We present the prompts used for query decomposition in ICL-QD and ICLF-QD in Figure \ref{fig:icl_prompt} and Figure \ref{fig:iclf_prompt}, respectively. For the latter one, we reuse the prompt from \cite{qi2023art}. 

Regarding the prompts used for query decomposition in \ours, we present what prompts are used as the input for the LLM optimizer and that for the LLM-based query decomposer in Figure \ref{fig:ours_prompts}, which also displays how these prompts evolve in the first four iterations of Algorithm \ref{alg:opt_prompt}. 

\begin{figure}[!h]
    \centering
    \includegraphics[width=0.4\linewidth]{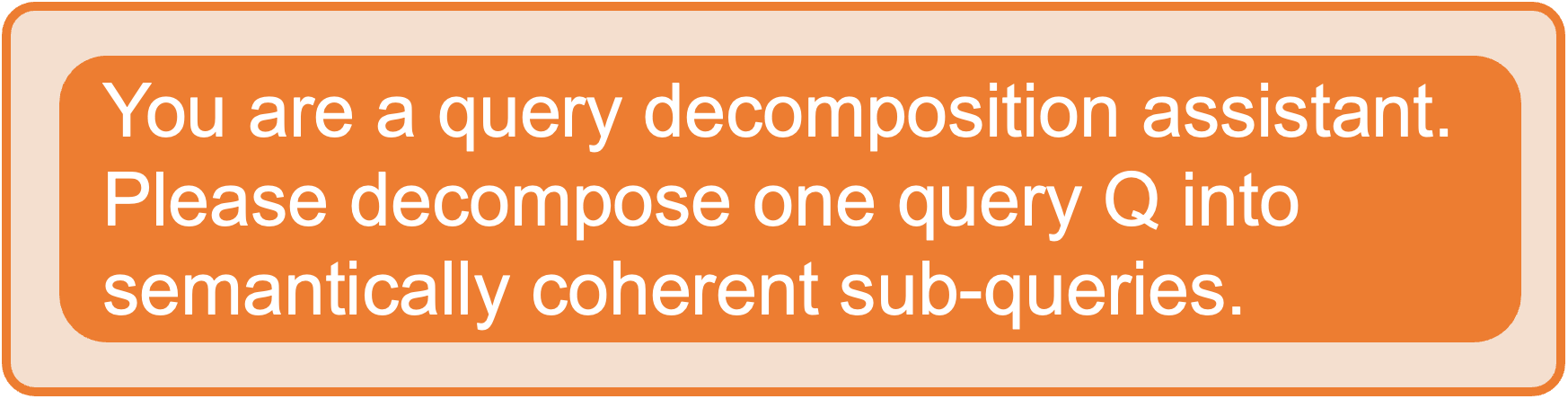}
    \caption{The prompt used for query decomposition in ICL-QD}
    \label{fig:icl_prompt}
\end{figure}

\begin{figure}[!h]
    \centering
    \includegraphics[width=\linewidth]{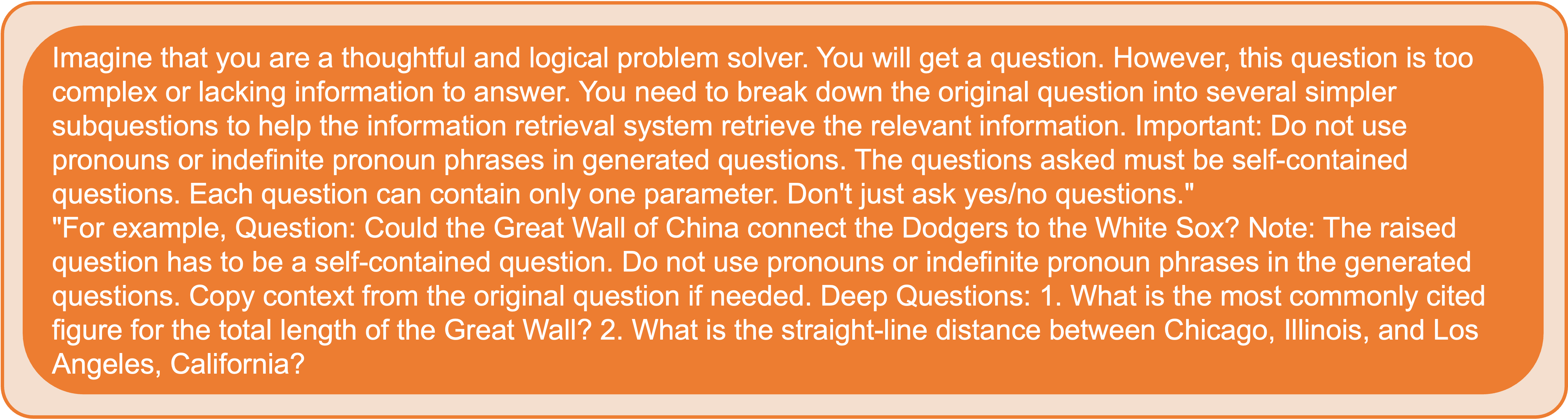}
    \caption{The prompt used for query decomposition in ICLF-QD}
    \label{fig:iclf_prompt}
\end{figure}

\begin{figure}[!h]
  \centering
  \subfigure[Iteration 0 of Algorithm \ref{alg:opt_prompt}]{
    \includegraphics[width=0.8\textwidth]{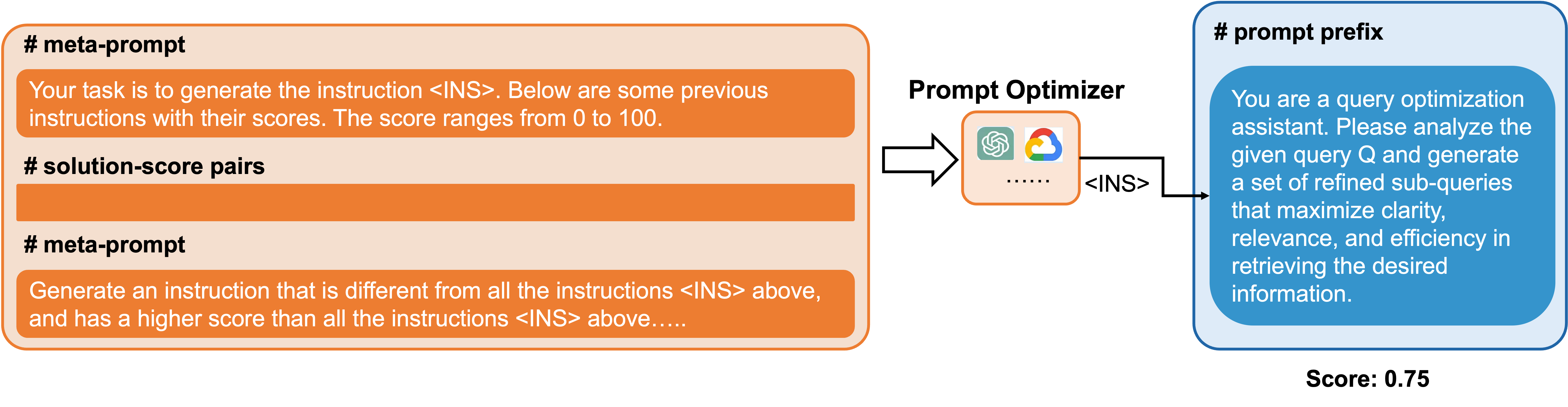}
    \label{fig:ours_prompts_0}
  }
  \hfill
  \subfigure[Iteration 1 of Algorithm \ref{alg:opt_prompt}]{
    \includegraphics[width=0.8\textwidth]{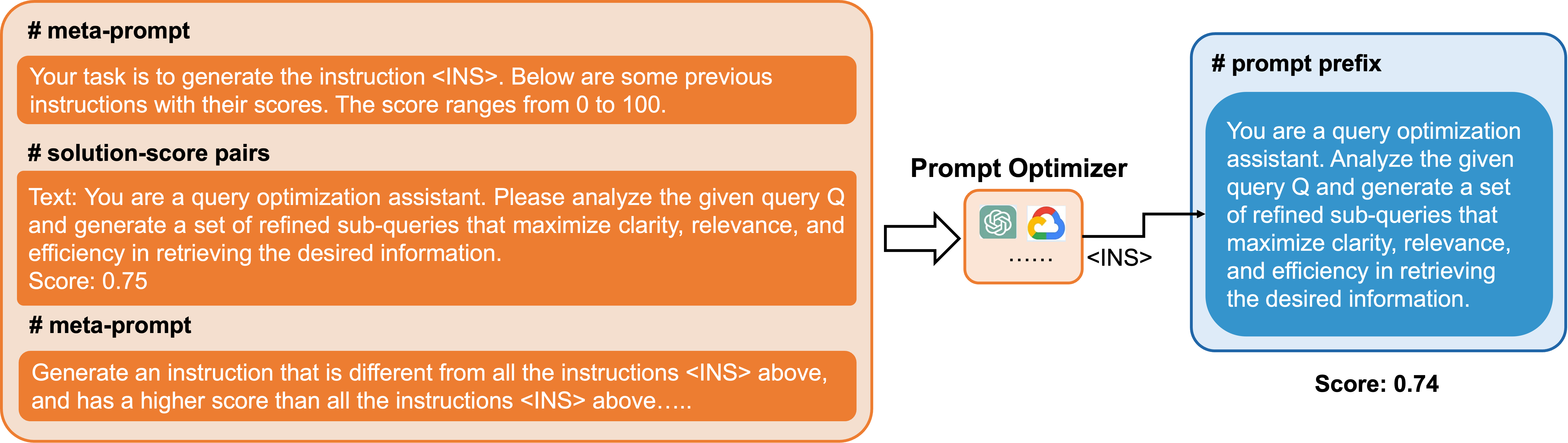}
    \label{fig:ours_prompts_1}
  }
  \hfill
  \subfigure[Iteration 2 of Algorithm \ref{alg:opt_prompt}]{
    \includegraphics[width=0.8\textwidth]{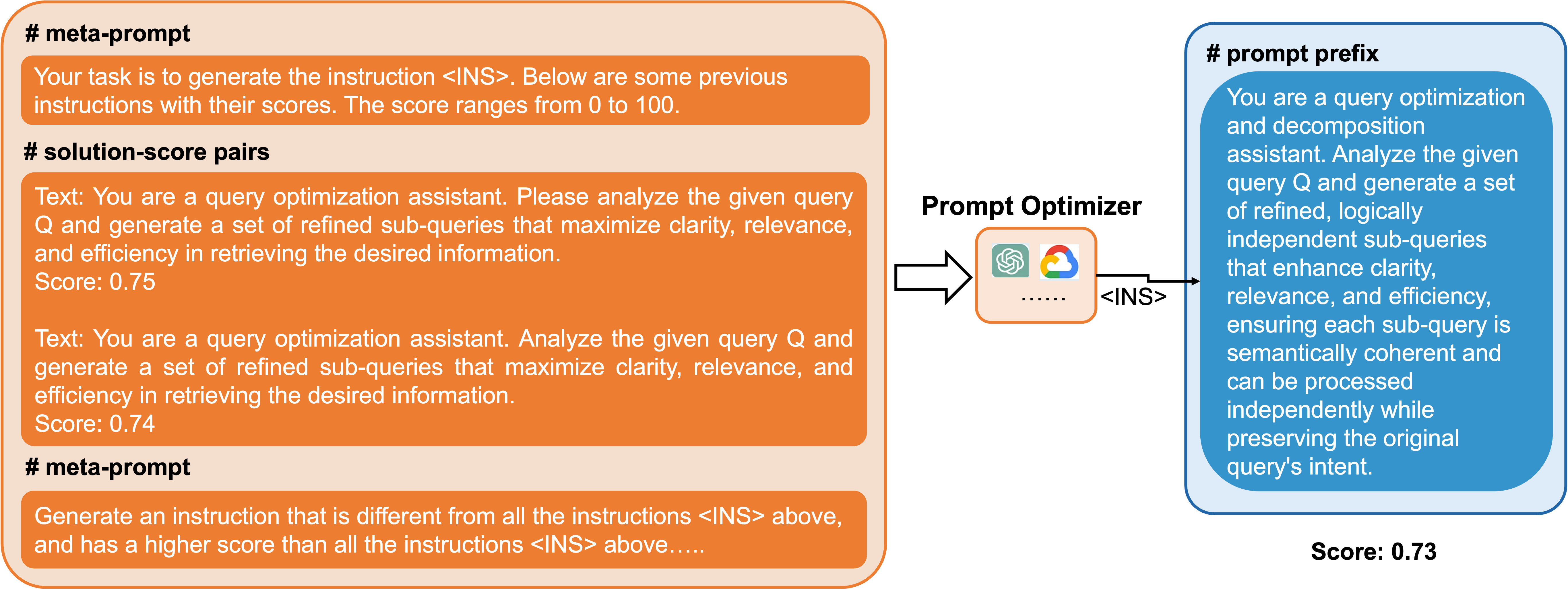}
    \label{fig:ours_prompts_2}
  }
  \hfill
  \subfigure[Iteration 3 of Algorithm \ref{alg:opt_prompt}]{
    \includegraphics[width=0.8\textwidth]{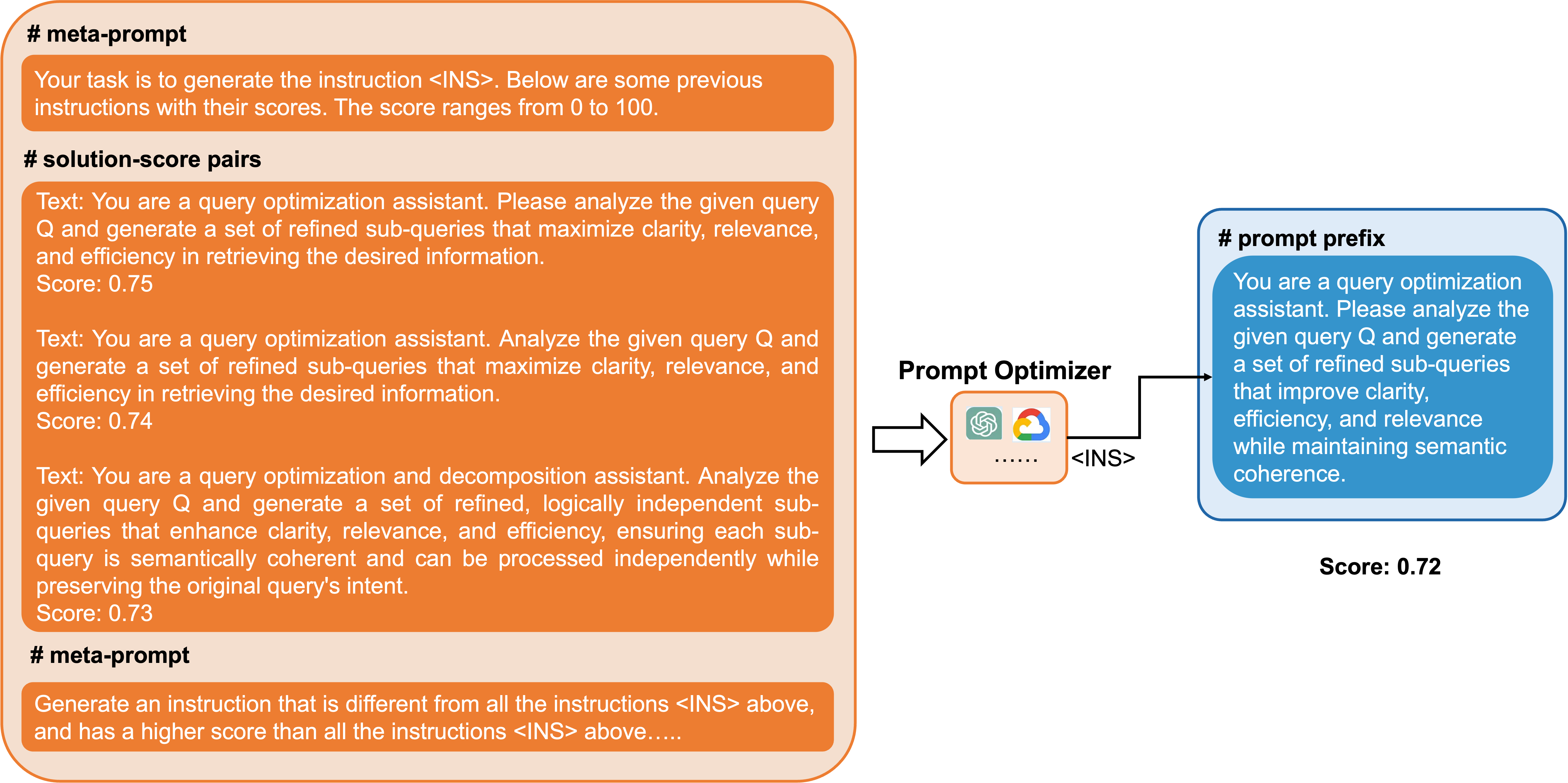}
    \label{fig:ours_prompts_3}
  }
  \caption{Prompts produced in the first four iterations by Algorithm \ref{alg:opt_prompt}}
  \label{fig:ours_prompts}
\end{figure}

\subsection{Comparison against the query rewrite strategy}
There is an emerging trend in the literature for rewriting user queries to maximize the RAG performance. Considering that both \ours\ and this line of work aim to manipulate the user queries for performance enhancement, we thus also compare \ours\ against one of such representative work \cite{ma2023query}. This experiment is conducted on the WebQA dataset in text QA, which produces results in Table \ref{Table: query_rewrite_comparison}. Note that since the query rewrite method still relies on the traditional dense retrieval strategy to retrieve relevant items, it thus suffers from poor retrieval performance, which eventually underperforms \ours\ by a large margin concerning the end-to-end QA accuracy.
\begin{table}[!h]
\small
    \centering
\begin{tabular}{c|cc}\toprule
& Top-2 retrieval accuracy & End-to-end QA accuracy\\ \midrule
Query rewrite & 28.42 & 52.16 \\
\ours & \textbf{53.96} & \textbf{61.87} \\ 
\bottomrule
\end{tabular}
\caption{Comparison between \ours\ and query rewrite strategy}\label{Table: query_rewrite_comparison}
\end{table}

\subsection{Additional ablation studies}\label{sec: addition_ablation}
We include additional ablation studies in this section. Considering the generally poor performance of ColBERT compared to ColBERT-orig as reported in Section \ref{sec: exp}, we ignore the results of ColBERT below. 

\paragraph{Effect of the varied number of retrieved items} We varied the number of retrieved Top-K relevant items in the retrieval process between 1 and 5 on the WebQA dataset in text QA, which leads to the results in Table \ref{Table: ablation_num_items}. These results again show that with varied numbers of relevant items, \ours\ consistently beat other methods regarding the QA accuracy numbers.
% \begin{table}[!h]
% \small
%     \centering
% \begin{tabular}{c|ccc}\toprule
% & 1 & 2 &  5 \\ \midrule
% w/o RAG & 40.36 & 40.36 & 40.36 \\
% Dense retrieval & 56.31 & 59.36 & 62.41 \\
% ColBERT  &\underline{62.41} &61.73 & \underline{69.21} \\ 
% S-QD & 51.60 &54.92 & 54.23 \\
% U-QD & 50.49 &49.24 & 55.48\\
% ICL-QD &61.58 &61.86 & 66.85 \\ 
% ICLF-QD &57.56 &\underline{63.52} & 60.75 \\  \midrule
% \ours & \textbf{62.97} &\textbf{68.10} & \textbf{69.63}\\ 
% \bottomrule
% \end{tabular}
% \caption{End-to-end QA accuracy on the MultiModalQA dataset in text QA task by varying the number of retrieved relevant documents in the retrieval process}\label{Table: ablation_num_items}
% \end{table}

\begin{table}[!h]
\small
    \centering
\begin{tabular}{c|ccc}\toprule
& 1 & 2 &  5 \\ \midrule
w/o RAG & 56.47 & 56.47 & 56.47 \\
Dense retrieval & 58.63 & 59.35 & 61.51 \\
ColBERT-orig & 59.70 & 61.14 & \underline{63.66} \\ 
S-QD & 58.99 & 58.99 & 62.23 \\
U-QD & 58.27 & \underline{60.79} & 62.23 \\
ICL-QD & 57.91 & 58.63 & 59.35\\ 
ICLF-QD & \underline{59.71} & 60.42 & 62.58 \\  \midrule
\ours & \textbf{61.15} & \textbf{62.22} & \textbf{64.03}\\ 
\bottomrule
\end{tabular}
\caption{End-to-end QA accuracy on the WebQA dataset in text QA by varying the number of retrieved relevant documents in the retrieval process}\label{Table: ablation_num_items}
\end{table}

\paragraph{Effect of the filtering step}
Considering that LLMs may produce irrelevant tokens during the query decomposition process, \ours\ filters out these irrelevant tokens in the generated sub-queries. We therefore further evaluate how this filtering step influences the performance of \ours\ as well as baseline methods.
% \begin{table}[!h]
% \small
%     \centering
% \begin{tabular}{c|cc|cc}\toprule
% & \multicolumn{2}{c|}{Without filtering step} & \multicolumn{2}{c}{With filtering step} \\ 
% & Top-1 retrieval accuracy & End-to-end QA accuracy & Top-1 retrieval accuracy & End-to-end QA accuracy \\ \midrule
% S-QD & \textbf{28.15} & \underline{35.82} & 27.05 & 35.42 \\
% U-QD &26.86 & 33.73 & 26.54 & 33.46\\
% ICL-QD & 27.76 & 34.70 & 27.51 & \underline{36.37} \\ 
% ICLF-QD & 27.89 & 34.77 & \underline{28.37} & 35.56 \\  \midrule
% \ours & \textbf{28.15} &\textbf{36.12} & \textbf{28.67} & \textbf{37.92} \\ 
% \bottomrule
% \end{tabular}
% \caption{Performance on the ManyModalQA dataset in image QA with VS without the filtering step}\label{Table: effect_filtering0}
% \end{table}

\begin{table}[!h]
\small
    \centering
\begin{tabular}{c|cc|cc}\toprule
& \multicolumn{2}{c|}{Without filtering step} & \multicolumn{2}{c}{With filtering step} \\ 
& Top-2 retrieval accuracy & End-to-end QA accuracy & Top-2 retrieval accuracy & End-to-end QA accuracy \\ \midrule
S-QD & \underline{41.37} & \underline{58.63} & 48.56 &58.99   \\
U-QD & 39.20 & 57.91 & 46.04 & \underline{60.79}   \\
ICL-QD & 35.97 & 57.55 & 41.37 & 58.63 \\ 
ICLF-QD & 41.01 & 58.27  & \underline{51.80} & 60.42     \\  \midrule
\ours & \textbf{52.51} & \textbf{61.87} & \textbf{53.24} & \textbf{62.22}  \\ 
\bottomrule
\end{tabular}
\caption{Performance on the WebQA dataset in text QA with VS without the filtering step}\label{Table: effect_filtering}
\end{table}

\paragraph{Effect of varied generator models}
We repeat the text QA experiments on the WebQA dataset by replacing its default generator model, Llama3.1-8B, with Qwen2.5 \cite{xu2025qwen2}. The resulting end-to-end QA accuracy is reported in Table \ref{Table: ablation_generator}, which suggests that \ours\ outperforms the baseline methods regardless of the generator model used.

\begin{table}[!h]
\small
    \centering
\begin{tabular}{c|ccc}\toprule
% &\multicolumn{3}{c|}{\textit{Image QA}} & \multicolumn{4}{c}{\textit{Text QA}} \\
% Dataset & \multicolumn{1}{c}{WebQA} & \multicolumn{1}{c}{MultiModalQA} & 
% \multicolumn{1}{c|}{ManyModalQA} & \multicolumn{1}{c}{WebQA} & \multicolumn{1}{c}{MultiModalQA} & 
% \multicolumn{1}{c}{ManyModalQA} & \multicolumn{1}{c}{StrategyQA} \\ \midrule
% & Top-20 & Top-100 & Top-20 & Top-100 & Top-20 & Top-100 & Top-20 & Top-100 \\ \midrule
% &MS&AR&TO&FE&CF&SF&CV&NF&NQ&HQ&FQ&SD&DB&QU \\ \toprule
& Llama3.1-8B & Qwen2.5 \\ \midrule
w/o RAG & 56.47 & 54.32 \\
Dense retrieval & 59.35 & 57.19 \\
ColBERT-orig  & \underline{61.14} & 57.55 \\ 
S-QD & 58.99 & 51.44 \\
U-QD & 60.79 & 50.72\\
ICL-QD & 58.63 & \underline{57.91}\\ 
ICLF-QD & 60.42 & 56.47 \\  \midrule
\ours &\textbf{62.22} & \textbf{59.35}\\ 
\bottomrule
\end{tabular}
\caption{End-to-End QA Accuracy with varied generator models on the WebQA dataset in the text QA task}\label{Table: ablation_generator}
\end{table}

\paragraph{Effect of varied embedding models for retrieval}
We also compare \ours\ against baseline methods by using the RoBERTa model \cite{liu2019roberta} rather than the Sentence-Bert model as the embedding model for retrieval. The results of this experiment are summarized in Table \ref{Table: ablation_embedding}, which again indicate the performance advantage of \ours\ compared to other methods. 
\begin{table}[!h]
\small
    \centering
\begin{tabular}{c|cc}\toprule
% &\multicolumn{3}{c|}{\textit{Image QA}} & \multicolumn{4}{c}{\textit{Text QA}} \\
% Dataset & \multicolumn{1}{c}{WebQA} & \multicolumn{1}{c}{MultiModalQA} & 
% \multicolumn{1}{c|}{ManyModalQA} & \multicolumn{1}{c}{WebQA} & \multicolumn{1}{c}{MultiModalQA} & 
% \multicolumn{1}{c}{ManyModalQA} & \multicolumn{1}{c}{StrategyQA} \\ \midrule
% & Top-20 & Top-100 & Top-20 & Top-100 & Top-20 & Top-100 & Top-20 & Top-100 \\ \midrule
% &MS&AR&TO&FE&CF&SF&CV&NF&NQ&HQ&FQ&SD&DB&QU \\ \toprule
& Top-2 retrieval accuracy & QA accuracy \\ \midrule
w/o RAG & - & 54.32  \\
Dense retrieval &22.29 & 58.63 \\
ColBERT-orig & \underline{43.53}& \underline{60.43}\\ 
S-QD & 22.30 & 58.99 \\
U-QD & 20.86 & 57.91\\
ICL-QD & 39.57 & 59.71\\ 
ICLF-QD & 34.89 & 60.07\\  \midrule
\ours & \textbf{43.88} & \textbf{61.51}\\ 
\bottomrule
\end{tabular}
\caption{Performance on the WebQA dataset in text QA by using the RoBERTa model as the embedding model for retrieval}\label{Table: ablation_embedding}
\end{table}

\subsection{Additional details on qualitative studies}\label{sec: addition_quali}
As mentioned in Section \ref{sec: qualitative}, we decompose the query appearing in Figure \ref{fig:motivating_exp} using \ours\ and baseline methods and retrieve images with the decomposed sub-queries. The retrieved images by these methods are shown in Figure \ref{fig:quali_baseline}, which suggests that those images do not match the buildings in Victoria, Hong Kong. Hence, this misalignment between these retrieved images and the ground-truth indicates that the decomposed sub-queries by baseline methods (shown in Table \ref{Table: qualitative}) are not reasonable.

\begin{figure}[!h]
  \centering
  \subfigure[S-QD]{
    \includegraphics[width=0.4\textwidth]{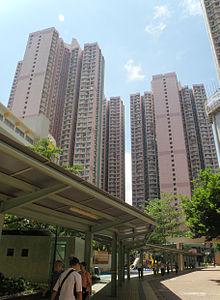}
    \label{fig:s_qd_example}
  }
  \hfill
  \subfigure[U-QD]{
    \includegraphics[width=0.4\textwidth]{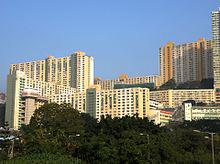}
    \label{fig:u_qd_example}
  }
  \hfill
  \subfigure[ICL-QD]{
    \includegraphics[width=0.4\textwidth]{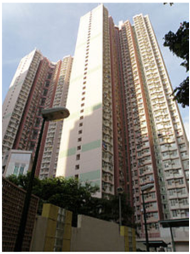}
    \label{fig:icl_qd_example}
  }
  \hfill
  \subfigure[ICLF-QD]{
    \includegraphics[width=0.4\textwidth]{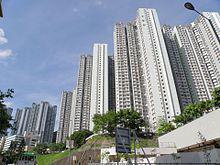}
    \label{fig:iclf_qd_example}
  }
  \caption{Retrieved images by using decomposed sub-queries produced by baseline methods}
  \label{fig:quali_baseline}
\end{figure}

\end{document}